\newcommand{\declarecolor}[2]{\definecolor{#1}{RGB}{#2}\expandafter\newcommand\csname #1\endcsname[1]{\textcolor{#1}{##1}}}
\DeclareMathOperator*{\argmin}{arg\,min}
\newcommand{\R}{\mathbb{R}}
\newcommand{\nnz}{\textnormal{\text{nnz}}}
\newcommand{\diag}{\textnormal{\text{diag}}}
\newcommand{\var}[1]{\textnormal{\texttt{#1}}}
\newcommand{\alg}[1]{\textnormal{\textsc{#1}}}
\newcommand{\mat}[1]{\textbf{#1}}
\newcommand{\MainAlg}{\alg{FastMinimumDegree}\xspace}
\DeclarePairedDelimiter{\abs}{\lvert}{\rvert}
\DeclarePairedDelimiter{\set}{\{}{\}}
\DeclarePairedDelimiter{\parens}{(}{)}
\DeclarePairedDelimiter{\floor}{\lfloor}{\rfloor}
\DeclarePairedDelimiter{\ceil}{\lceil}{\rceil}
\theoremstyle{plain}
\newtheorem{theorem}{Theorem}[section]
\newtheorem{lemma}[theorem]{Lemma}
\newtheorem{corollary}[theorem]{Corollary}
\theoremstyle{definition}
\newtheorem{definition}[theorem]{Definition}
\begin{document}

\title{A Fast Minimum Degree Algorithm and Matching Lower Bound}

\author[3]{Robert Cummings\thanks{
Email: \href{mailto:robert.cummings@uwaterloo.ca}{\textsf{robert.cummings@uwaterloo.ca}}.
}}
\author[1]{Matthew Fahrbach\thanks{
Email: \href{mailto:fahrbach@google.com}{\textsf{fahrbach@google.com}}.
Part of this work was done at the Georgia Institute of Technology
while supported by an
NSF Graduate Research Fellowship under grant DGE-1650044.
}}
\author[2]{Animesh Fatehpuria\thanks{
Email: \href{mailto:afatehpuria@nuro.ai}{\textsf{afatehpuria@nuro.ai}}.
}}
\affil[1]{Google Research}
\affil[2]{Nuro}
\affil[3]{School of Computer Science, University of Waterloo}

\maketitle

\begin{abstract}
The minimum degree algorithm is one of the most widely-used heuristics for
reducing the cost of solving large sparse systems of linear equations.  It
has been studied for nearly half a century and has a rich history of bridging
techniques from data structures, graph algorithms, and scientific~computing.
In this paper, we present a simple but novel
combinatorial algorithm for computing an exact minimum degree elimination
ordering in $O(nm)$ time,
which improves on the best known time complexity of~$O(n^3)$ 
and offers practical improvements for sparse systems with small values of $m$.
Our approach leverages a careful amortized analysis,
which also allows us to derive output-sensitive bounds for the running time
of $O(\min\{m\sqrt{m^+}, \Delta m^+\} \log n)$,
where $m^+$ is the number of unique fill edges and original edges 
that the algorithm encounters
and $\Delta$ is the maximum degree of the input graph.

Furthermore, we show there cannot exist an exact
minimum degree algorithm that runs in $O(nm^{1-\varepsilon})$ time, for any
$\varepsilon > 0$, assuming the strong exponential time hypothesis.
This fine-grained reduction goes through the orthogonal vectors problem
and uses a new low-degree graph construction
called \emph{$U$-fillers}, which act as pathological inputs
and cause any minimum degree algorithm to exhibit nearly worst-case performance.
With these two results, we nearly characterize the time complexity
of computing an exact minimum degree ordering.
\end{abstract}

\newpage
\tableofcontents

\pagenumbering{gobble}
\clearpage
\pagenumbering{arabic}

\section{Introduction}
\label{sec:introduction}

The minimum degree algorithm is one of the most widely-used heuristics for
reducing the cost of solving sparse systems of linear equations.
This algorithm was first proposed by Markowitz~\cite{markowitz1957elimination}
in the context of reordering equations that arise in asymmetric
linear programming problems,
and it has since been the impetus for
using graph algorithms and data structures in
scientific computing~\cite{rose1972graph,tarjan1976graph,george1979quotient,george1981computer,george1989evolution}.
This line of work culminated in the approximate minimum degree algorithm (AMD)
of Amestoy, Davis, and Duff~\cite{amestoy1996approximate},
which has long been a workhorse in the sparse linear algebra libraries for
Julia, MATLAB, Mathematica, and SciPy.
Formally, the minimum degree algorithm is a
preprocessing step that permutes the rows and columns of a sparse
symmetric positive-definite matrix $\mat{A} \in \R^{n \times n}$,
before applying Cholesky decomposition,
in an attempt to minimize the number of nonzeros in the Cholesky factor.
Without a judicious reordering,
the decomposition typically becomes dense with \emph{fill-in}
(i.e., additional nonzeros).
The goal of the minimum degree algorithm is to efficiently compute
a permutation matrix $\mat{P}$ 
such that the Cholesky factor~$\mat{L}$ in the reordered matrix
$\mat{P}\mat{A}\mat{P}^\intercal = \mat{L}\mat{L}^\intercal$ is close to
being optimally sparse.
Finding an optimal permutation, however, is
NP-complete~\cite{yannakakis1981computing},
so practical approaches such as
minimum degree orderings,
the Cuthill–McKee algorithm~\cite{cuthill1969reducing},
and nested dissection~\cite{george1973nested} are used instead.
We direct the reader to
``The Sparse Cholesky Challenge''
in \cite[Chapter 11.1]{golub2013matrix}
to further motivate efficient reordering algorithms
and for a comprehensive survey.

The minimum degree algorithm takes advantage of a separation between the
symbolic and numerical properties of a matrix.
To see this, start by viewing the nonzero structure of $\mat{A}$ as
the adjacency matrix of an undirected graph $G$ with
$m = \nnz(\mat{A} - \diag(\mat{A}))/2$ edges.
Permuting the matrix by $\mat{P}\mat{A}\mat{P}^\intercal$ does not
change the underlying graph.
In each iteration, the algorithm
(1) selects the vertex $u$ with minimum degree,
(2) adds edges between all pairs of neighbors of $u$ to form a clique, and
(3) deletes $u$ from the graph.
Through the lens of matrix decomposition,
each greedy step corresponds to performing row and column permutations 
that minimize the number of off-diagonal nonzeros in the pivot row and column.
A clique is induced on the neighbors of $u$
in the subsequent graph because of the widely-used
\emph{no numerical cancellation assumption} (i.e., nonzero entries remain
nonzero).
Motivated by the success and ubiquity of reordering algorithms in sparse
linear algebra packages, and also by recent developments
in the hardness of computing minimum degree orderings of Fahrbach et al.~\cite{fahrbach2018graph},
we investigate the fine-grained time complexity of
the minimum degree ordering problem.

\subsection{Results and Techniques}

Our main results complement each other and nearly
characterize the time complexity to compute a minimum degree ordering.
Our first result is a new combinatorial algorithm
that outputs an exact minimum degree ordering in $O(nm)$ time.
This improves upon the best known result of $O(n^3)$ achieved by the naive algorithm.
We maintain two different data structures for the fill graph
to efficiently detect and avoid the redundant work encountered by the naive algorithm.
Using careful amortized analysis, we prove the following theorems.

\begin{restatable}[]{theorem}{TheoremAlgorithm}
\label{thm:algorithm}
The \MainAlg algorithm outputs an exact minimum degree ordering in $O(nm)$~time.
\end{restatable}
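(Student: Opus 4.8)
The plan is to avoid ever materializing a fill clique --- the operation that pins the naive algorithm at $\Theta(n^3)$ --- by keeping the fill graph \emph{implicitly}, in the spirit of the George--Liu quotient (elimination) graph, and recovering exact degrees from this implicit object only when they are needed. Concretely, I would have \MainAlg maintain two coupled representations of the current fill graph on the not-yet-eliminated vertices: (i) a quotient graph $G^\ast$ whose nodes are the surviving vertices plus one \emph{element} per eliminated vertex, kept under the classical invariant $|E(G^\ast)| = O(m)$ by \emph{absorbing} each element into the new element it becomes adjacent to (tracked with a union--find structure, hence the \FindSet and \Union primitives); and (ii) an explicit record of the fill (and original) edges already discovered, so that a reachability query does work proportional only to structure not seen before. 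From $G^\ast$ the \emph{reachable set} $\Reachable(u)$ --- i.e.\ the neighborhood of $u$ in the current fill graph, and hence its exact degree $|\Reachable(u)|$ --- is computed by a depth-two traversal $u \to \{\text{incident elements}\} \to \{\text{surviving vertices}\}$, deduplicated with a timestamp array; a bucket queue over degrees supports $O(1)$-amortized minimum extraction and degree changes.

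With these structures, one iteration is: extract a minimum-degree vertex $v$; compute $R := \Reachable(v)$; turn $v$ into a new element $e_v$ whose surviving-vertex adjacency is exactly $R$, redirect $v$'s incident quotient edges, and absorb into $e_v$ every element previously adjacent to $v$; then, for each $u \in R$, recompute $|\Reachable(u)|$ and move $u$ to the correct bucket. Correctness is the standard equivalence between the quotient graph and the true fill graph on the surviving vertices --- I would state it as a lemma and verify that absorption preserves it --- so the vertex chosen at every step is genuinely of minimum degree and the emitted sequence is an exact minimum-degree ordering. The accompanying bookkeeping (maintaining $|E(G^\ast)| = O(m)$ under absorption, keeping the buckets consistent, breaking ties, eliminating isolated vertices up front) is routine once the invariants are pinned down.

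The running time then reduces to an amortized $O(m)$ bound per iteration. Two of the three pieces are easy: converting $v$ into $e_v$ and performing the absorptions touches only $O(|E(G^\ast)|) = O(m)$ quotient edges, and just listing the vertices of the $R_i$ over all iterations costs $\sum_i |\Reachable(v_i)| = O(n^2)$, which is $O(nm)$ once we assume (processing connected components separately) that $G$ is connected. \textbf{The main obstacle is the degree-recomputation cost}, $\sum_i \sum_{u \in R_i}(\text{cost of recomputing }|\Reachable(u)|)$: a surviving vertex may lie in many consecutive reachable sets, each potentially large, so the naive bound exceeds $nm$, and a traversal that rescans the giant new element $e_v$ once per $u \in R_i$ is already $\Theta(|R_i|^2)$ in a single iteration. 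The fix I would pursue is twofold --- (a) seed the timestamp array once per iteration with all of $R_i$, which is free because $\Reachable(e_v) = R_i$ is already in hand, so that recomputing $\Reachable(u)$ for $u \in R_i$ only needs to scan $u$'s direct edges and its elements \emph{other than} $e_v$; and (b) charge every edge so scanned against an edge of the current $O(m)$-edge quotient graph, arguing from the absorption rule that any surviving element incident to two vertices of $R_i$ was already incident to them before the elimination, so that the residual work telescopes against the union--find potential and the structure (ii) record of already-discovered reachabilities. Driving this charging argument through --- in particular showing the per-iteration recomputation is $O(|E(G^\ast)|) = O(m)$ rather than quadratic in it --- is the technical heart of the proof; summing over the $n$ iterations yields $O(nm)$.
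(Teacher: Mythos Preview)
Your outline departs substantially from the paper's proof, and the step you yourself flag as ``the technical heart'' is exactly where a genuine argument is missing. You keep the fill graph implicitly via a George--Liu quotient graph and \emph{recompute} exact fill degrees by depth-two reachability after each elimination; the paper does the opposite. It stores the fill graph \emph{explicitly} (as an adjacency matrix, so degrees are always current and no recomputation is ever performed) together with a hyperedge cover of the fill graph, and its bottleneck is the number of \emph{edge-insertion attempts} into this explicit structure. The key lemma shows that for each $\{u,v\}\in E^+$ the number of insertion attempts is at most $\min(\deg(u),\deg(v))$ (degrees in the \emph{original} graph), via the potential $f(u,v)$ counting hyperedges that contain $u$ but not $v$: this quantity never increases and drops by at least one at every attempt on $\{u,v\}$. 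An edge-orientation inequality then gives $\sum_{\{u,v\}\in E^+}\min(\deg(u),\deg(v))=O(m\sqrt{m^+})=O(nm)$.

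Your fixes (a) and (b), as stated, do not bound the recomputation cost by $O(m)$ per iteration. Fix~(a) skips the new element $e_v$ but not the \emph{other} elements incident to each $u\in R_i$. Take $k$ centers $c_1,\dots,c_k$ each joined to the same $k$ leaves $v_1,\dots,v_k$ (so $n=2k$, $m=k^2$); a valid min-degree ordering eliminates all $c_j$ first, no absorptions occur among the resulting elements (no $c_j$ is adjacent to any element), and after eliminating $c_j$ every leaf is still adjacent to $e_{c_1},\dots,e_{c_{j-1}}$, each of degree $k$, which must be scanned---$\Theta(k^4)$ total against $nm=\Theta(k^3)$. Fix~(b) invokes telescoping against a union--find potential and the record of discovered fill edges, but in this instance no unions occur during those $k$ iterations, and knowing the leaves are already mutual fill-neighbors does not by itself license skipping the scans. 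Even granting a per-$(u,e)$ ``fully explored'' marker (which you do not describe), the natural bound one recovers is $\sum_i(\deg^+_{i-1}(v_i))^2$, and on the paper's own sparse $U$-filler instances this is $\Theta(|U|^3)$ while $nm=\Theta(|U|^2\log^2|U|)$. Exact-degree recomputation from the quotient graph is precisely the operation that drove practitioners to AMD's approximate degrees; the paper's contribution is to bypass it entirely, and if you want the quotient-graph route to succeed you must actually exhibit the charging scheme and prove the $O(m)$-per-iteration bound rather than defer it.
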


Our analysis also allows us to derive output-sensitive bounds.
The fill produced by the minimum degree heuristic is typically small in
practice, so the performance of the algorithm is often faster than $O(nm)$.
The algorithm also allows for further practical implementations,
such as using hash table-based adjacency lists.

\begin{restatable}[]{theorem}{TheoremAlgorithmOutputSensative}
\label{thm:algorithm-output-sensative}
The \MainAlg algorithm can be implemented to run in
$O(\min\set{m\sqrt{m^+}, \Delta m^+} \log n)$ time
and use $O(m^+)$ space,
where $m^+$ is the number of unique fill edges and original edges
that the algorithm encounters
and $\Delta$ is the maximum degree of the original graph.
\end{restatable}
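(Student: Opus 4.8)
The plan is to revisit the \MainAlg algorithm and the amortized analysis behind \Cref{thm:algorithm}, refining it on two fronts: a data-structural substitution that makes the space usage output-sensitive, and a tighter accounting that charges the running time against the $m^+$ edges the algorithm actually encounters rather than against the crude estimate of $O(m)$ work per elimination. For the space bound and the extra $\log n$ factor, I would replace every length-$n$ direct-address array (degree counters, adjacency arrays, marker arrays) by a balanced search tree or hash table keyed on vertex identifiers. Then memory is allocated only for vertices and edges that are touched, which is $O(m^+)$ in total, while each adjacency probe or update that cost $O(1)$ in \Cref{thm:algorithm} now costs $O(\log n)$. The correctness argument from \Cref{thm:algorithm} is untouched, so the entire content of the statement is in the time analysis.

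I would split the work into (a) priority-queue operations that locate and re-key the minimum-degree vertex, (b) maintenance of the quotient-graph–style element structure that lets the algorithm avoid re-deriving old fill edges, and (c) reading reachable sets and inserting genuinely new fill edges into the explicit fill-graph representation. Two structural facts drive everything. First, in the quotient graph of any elimination the number of (variable or element) edges incident to a surviving vertex never exceeds that vertex's original degree, hence is at most $\Delta$; moreover $\sum_u \deg_G(u) = 2m$, so the total size of all quotient adjacencies examined ``from the eliminated side'' is $O(m)$. Second, since the eliminated vertex $u$ always has minimum degree, a handshake count gives $|N(u)| \le \sqrt{2 m_i} \le \sqrt{2m^+}$ at every step, where $m_i \le m^+$ is the current number of fill edges; consequently every element's member list, being the reachable set recorded when some earlier minimum-degree vertex was eliminated, has size $O(\sqrt{m^+})$ throughout (absorption only shrinks it). Bucket (c) is $O(m^+\log n)$ outright, because every original edge is encountered $O(1)$ times (when its first endpoint is eliminated) and every fill edge $O(1)$ times (when it is created), and $\sum_u |N(u)|$ counts exactly the edges that ever appear in the fill graph, so it is at most $m^+$.

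For buckets (a) and (b) I would prove the two halves of the minimum separately. For the $O(\Delta m^+\log n)$ half: a vertex has its quotient adjacency modified or its priority-queue key refreshed only when it lies in the reachable set $N(u)$ of the vertex being eliminated, so the number of such events is $\sum_u |N(u)| \le m^+$, and each event costs $O(\Delta\log n)$ because the vertex's quotient adjacency has size at most $\Delta$. For the $O(m\sqrt{m^+}\log n)$ half: processing the elimination of $u$ touches $u$'s at most $\deg_G(u)$ quotient neighbors, each incident element's member list (of size $O(\sqrt{m^+})$), and one freshly built element of size $|N(u)|=O(\sqrt{m^+})$; since $u$'s quotient degree is at most $\deg_G(u)$ and $\sum_u \deg_G(u) = 2m$, the element-list scans sum to $O(m\sqrt{m^+}\log n)$, while the new-element builds sum to $O(m^+\log n)=O(m\sqrt{m^+}\log n)$.

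I expect the main obstacle to be keeping the degrees \emph{exact} within these two budgets. Unlike the approximate minimum degree heuristic, \MainAlg must re-key the priority queue with true degrees, and a naive recomputation of a vertex's degree through the element structure would blow either budget; so one must show that degree updates can be done incrementally, counting each new fill edge exactly once as it is materialized, and that detecting which pairs inside $N(u)$ are already adjacent — so that they are not reinserted — is done by consulting the element structure rather than by a pairwise scan of $N(u)$. Setting up this interaction between the explicit fill-edge representation and the quotient structure so that no unit of work is charged against both, and verifying that the amortization survives the element-absorption step (and any indistinguishable-vertex merging), is where the bulk of the careful bookkeeping concentrates.
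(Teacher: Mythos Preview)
Your data-structure substitution (balanced search trees in place of direct-address arrays, yielding $O(m^+)$ space at the cost of an $O(\log n)$ factor per operation) is exactly what the paper does. The gap is in the time analysis. Your buckets (a)--(c) do not account for the dominant cost in \MainAlg, namely the number of \emph{edge insertion attempts}: the iterations over $(x,y)\in X\times Y$ on line~14. Bucket~(c) counts only \emph{successful} insertions, which are indeed $O(m^+)$, but the algorithm also attempts many pairs that are already adjacent (because the pair is covered by some hyperedge not containing the current pivot), and you never bound those. Your observation that every hyperedge has size $O(\sqrt{m^+})$---since it arose as the neighborhood of a minimum-degree vertex in a graph with at most $m^+$ edges---is correct, and it does bound the list-scanning work in bucket~(b) by $\sum_u \deg_G(u)\cdot O(\sqrt{m^+})=O(m\sqrt{m^+})$. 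But the insertion-attempt cost in one iteration is $\sum_j |X_j|\,|Y_j|$, and from $|X_j|\le|N(u)|\le\sqrt{2m^+}$ and $\sum_j|Y_j|=|N(u)|$ you only get $|N(u)|^2$ per iteration, hence $O((m^+)^{3/2})$ in total, which is not $O(m\sqrt{m^+})$ when $m^+\gg m$. Likewise, in your $\Delta m^+$ argument the ``$O(\Delta)$ per event'' refers to the size of a vertex's quotient adjacency, but that quantity does not control the number of $X\times Y$ pairs the vertex participates in during an iteration.

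The paper closes this gap with a per-edge potential argument (\Cref{lem:sum-min-degree-bound}): for $\{u,v\}\in E^+$, let $f(u,v)$ be the number of hyperedges containing $u$ but not $v$. This quantity starts at $\deg_G(u)$, never increases while $u$ and $v$ both survive (hyperedges only merge), and strictly decreases at every insertion attempt for $\{u,v\}$ (a hyperedge containing $u$ but not $v$ has just merged into one containing both). Hence total insertion attempts are at most $\sum_{\{u,v\}\in E^+}\min(\deg_G(u),\deg_G(v))$, which gives the $\Delta m^+$ bound immediately and the $m\sqrt{m^+}$ bound via a $\sqrt{2m^+}$-outdegree orientation of $(V,E^+)$ (\Cref{fact:vertex-function}). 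This per-edge monotone potential is the missing idea; your hyperedge-size and quotient-degree bounds alone do not supply it, and the paragraph where you flag this as ``the main obstacle'' is precisely where the argument, as written, stops short.
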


Our second main result improves upon a recent conditional hardness theorem
of $O(m^{4/3-\varepsilon})$ for computing exact minimum degree elimination orderings
assuming the strong exponential time hypothesis~\cite{fahrbach2018graph}.

\begin{restatable}[]{theorem}{TheoremHardness}
\label{thm:hardness}
Assuming the strong exponential time hypothesis, there does not exist an
$O(m^{2-\varepsilon}\Delta^k)$ algorithm for computing
a minimum degree elimination ordering,
where $\Delta$ is the maximum degree of the original graph,
for any $\varepsilon> 0$ and $k\ge 0$.
\end{restatable}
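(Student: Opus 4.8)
The plan is to give a fine-grained reduction from the Orthogonal Vectors problem (OV). Recall the standard consequence of SETH: for every $\varepsilon>0$ there is a constant $c=c(\varepsilon)$ such that OV on sets $A,B\subseteq\{0,1\}^d$ with $|A|=|B|=n$ and $d=c\log n$ cannot be solved in $O(n^{2-\varepsilon})$ time; thus any $\mathrm{polylog}(n)=n^{o(1)}$ overhead is free. It therefore suffices to build, in $n^{1+o(1)}$ time, a graph $G=G(A,B)$ with $m=n^{1+o(1)}$ edges and maximum degree $\Delta=n^{o(1)}$ such that the OV answer can be read off from \emph{any} minimum degree elimination ordering of $G$ in $n^{1+o(1)}$ time. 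Given this, a hypothetical $O(m^{2-\varepsilon}\Delta^{k})$ minimum degree algorithm (for fixed $\varepsilon>0$, $k\ge 0$), followed by the decoder, would solve OV in $n^{(1+o(1))(2-\varepsilon)}\cdot n^{o(1)}=O(n^{2-\varepsilon/2})$ time for large $n$ --- contradicting SETH. Since $\varepsilon$ and $k$ are arbitrary, the theorem follows.

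At the center of the construction is the \emph{$U$-filler} gadget. The graph $G$ records the supports of the vectors through low-degree incidence gadgets (so that $a$ and $b$ ``interact'' exactly when $\langle a,b\rangle\neq 0$) and replaces every high-fan-in connector by a $U$-filler: a graph on a terminal set $U$ together with auxiliary vertices, all of polylogarithmic degree in $G$, designed so that (i) its auxiliary vertices are among the cheapest vertices in $G$ and are therefore eliminated first by any minimum degree algorithm, and (ii) the resulting cascade of eliminations deposits a precisely prescribed --- and necessarily large, $\Omega(n^{2-o(1)})$ --- collection of fill edges that materializes a graph encoding the non-orthogonality relation on $A\times B$. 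It is exactly this gap between the near-linear size of $G$ and the $\Omega(n^{2-o(1)})$ fill it provokes that makes $U$-fillers behave as pathological, near-worst-case inputs; producing that much controlled fill while every \emph{original} degree stays polylogarithmic --- so that the gadgets are provably processed first --- is the key design constraint.

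Once the filler cascade completes, the degrees of the vector vertices differ between YES and NO instances only through the minimum, over $a\in A$, of $\lvert B\rvert-\lvert\{b\in B:\langle a,b\rangle=0\}\rvert$, which is strictly smaller than $\lvert B\rvert$ precisely when an orthogonal pair exists. Padding $G$ with further low-degree gadgets lets us calibrate the degree dynamics so that a designated detector vertex $w$ is eliminated at a fixed step $t$ in every valid minimum degree ordering when the instance is NO, but strictly before step $t$ in every valid ordering when it is YES; the decoder then merely locates $w$ in the output permutation and compares its index with $t$. Concretely the proof proceeds by: (1) fixing $n,d$ and invoking the SETH-hardness of OV; (2) defining the $U$-filler and verifying, in isolation, its degree bound, its forced early elimination, and the fill pattern it emits; (3) assembling $G(A,B)$ from incidence gadgets, $U$-fillers, and padding, and bounding $m=n^{1+o(1)}$ and $\Delta=n^{o(1)}$; (4) proving the invariant that, up to step $t$, the minimum degree process is insensitive to the OV instance --- in particular $\deg(w)$ stays above the running minimum on NO instances and drops below it by step $t$ on YES instances --- so that $w$'s elimination index in \emph{every} valid ordering encodes the OV answer; (5) combining with the hypothetical fast algorithm to contradict SETH.

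The crux is step (2) together with the invariant in step (4). One must design the $U$-filler so that it emits $\Omega(n^{2-o(1)})$ fill in a tightly controlled pattern while keeping $G$ of near-linear size and \emph{every} vertex --- gadget-internal and terminal alike --- of polylogarithmic degree in the original graph, so that the algorithm is forced to clear the gadgets before touching the detector; the naive realization that attaches a separate clique-forming gadget per coordinate is already too large. One must then certify that, because eliminating a vertex perturbs its neighbors' degrees, the running degree multiset nonetheless evolves in a way that keeps $\deg(w)$ above the minimum on NO instances up to the moment the detector fires, and that this holds robustly against \emph{every} admissible tie-breaking rule, since the theorem must hold against any algorithm that merely outputs some correct minimum degree ordering. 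Making the $\Theta(n^{2})$ ``pair tests'' of OV emerge from gadgets of only polylogarithmic degree, without the gadgets interfering with one another or prematurely lowering $\deg(w)$, is the engineering that the $U$-filler is built to accomplish, and is where essentially all of the difficulty resides.
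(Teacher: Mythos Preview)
Your proposal correctly identifies the essential ingredients --- the OV reduction, the $U$-filler gadget with polylogarithmic degree and near-linear size, and the need for the argument to survive every tie-breaking rule --- and the parameter arithmetic in your opening paragraph is exactly right. But the decoding mechanism you sketch (a designated detector vertex $w$, padding to calibrate its degree, and a step-$t$ invariant tracked against all tie-breakings) is substantially more delicate than what the paper actually does, and your framing of the filler's purpose is slightly off.

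The paper's reduction is simpler. It rephrases OV as the \emph{clique union problem}: given $V$ and $U_1,\dots,U_d\subseteq V$ with $d=\Theta(\log^2 n)$ (where $U_i$ is the set of vectors with nonzero $i$-th coordinate), decide whether $K_{U_1}\cup\cdots\cup K_{U_d}=K_V$. It then builds $G=G_1\cup\cdots\cup G_d$ where each $G_i$ is a min-degree $U_i$-filler from the gadget theorem; this already gives $m=O(nd\log n)=n^{1+o(1)}$ and $\Delta=O(d\log n)=n^{o(1)}$ with no further ``incidence gadgets'' or padding. The decoder does two cheap checks on the returned ordering: (a) if some non-extra vertex appears among the first $|W|$ positions, output ``not $K_V$''; (b) otherwise let $v$ be the vertex at position $|W|+1$ and compute its degree in $G_W^+$ by a single BFS through $W$ in the \emph{original} graph $G$, then compare to $|V|-1$. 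Step (b) is linear in $|E(G)|$, and step (a) is sound because the filler is not merely ``min-degree'' but also $(|U_i|-3)$-bounded: any remaining extra vertex has degree at most $|V|-3$, so a vector vertex being minimum while extras remain already certifies the union is not $K_V$.

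Two consequences for your plan. First, the hard ``step (4) invariant'' you flag as the crux simply does not arise: there is no detector vertex whose degree must be tracked through the cascade, and hence no padding and no tie-breaking case analysis beyond the one-line $d$-boundedness argument. Second, your description of the filler's design goal --- ``emit $\Omega(n^{2-o(1)})$ fill in a controlled pattern'' --- inverts cause and effect: the target properties are (i) the extra vertices are always strictly below every $U$-vertex in degree (min-degree), and (ii) they stay below a fixed bound throughout (so case (a) is sound); the large fill is a byproduct. Also, attaching one filler per coordinate is \emph{not} too large, since $d=\Theta(\log^2 n)$; what would be too large is attaching $K_{U_i}$ itself, which is exactly what the filler replaces.
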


\noindent
This result is given in its full generality above, and it implies an answer to
$O(nm^{1-\varepsilon})$-hardness conjecture posed in~\cite{fahrbach2018graph}.
Specifically, we have the following matching lower bound for our main algorithm.

\begin{restatable}[]{corollary}{CorollaryHardness}
\label{cor:hardness}
Assuming the strong exponential time hypothesis,
there does not exist an~$O(nm^{1-\varepsilon})$ time algorithm for computing
a minimum degree elimination ordering,
for any $\varepsilon > 0$.
\end{restatable}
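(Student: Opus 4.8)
The plan is to obtain Corollary~\ref{cor:hardness} directly from Theorem~\ref{thm:hardness} via a one-line parameter count, using the $k = 0$ instantiation of the theorem; the only thing needing care is the relationship between $n$ and $m$ on the hard instances.

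First I would reduce to the case of graphs with no isolated vertices. Given an arbitrary graph $G$, its set $I$ of degree-$0$ vertices can be found in $O(n + m)$ time, and since every minimum degree ordering must eliminate all degree-$0$ vertices before any vertex of positive degree, the concatenation of $I$ (in arbitrary order) with any minimum degree ordering of $G \setminus I$ is a minimum degree ordering of $G$. Thus an algorithm running in time $T(n,m)$ on isolated-vertex-free graphs yields one running in time $O(n + m + T(n', m))$ on general graphs, where $n' \le n$ counts the non-isolated vertices; for $T(n,m) = O(nm^{1-\varepsilon})$ this is still $O(nm^{1-\varepsilon})$. (Alternatively, one simply observes that the orthogonal-vectors reduction used to prove Theorem~\ref{thm:hardness} already outputs graphs with no isolated vertices, so this step can be skipped.)

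Next, for any graph with no isolated vertices we have $\sum_v \deg(v) \ge n$, hence $n \le 2m$. Therefore an $O(nm^{1-\varepsilon})$-time algorithm runs in time $O(2m \cdot m^{1-\varepsilon}) = O(m^{2-\varepsilon})$ on every such graph. But $O(m^{2-\varepsilon})$ is precisely the $k = 0$ case of the running time $O(m^{2-\varepsilon}\Delta^k)$ forbidden by Theorem~\ref{thm:hardness}, so the existence of an $O(nm^{1-\varepsilon})$ algorithm would contradict the strong exponential time hypothesis. This establishes the corollary.

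I do not expect a genuine obstacle here: the corollary is strictly weaker than Theorem~\ref{thm:hardness}, and all of the real work — the reduction from orthogonal vectors and the $U$-filler gadget controlling $\Delta$ — is carried out in the proof of the theorem. The only point worth double-checking is that we are entitled to discard the $\Delta^k$ factor entirely (i.e., to use $k = 0$); this is legitimate because the bound $n \le 2m$ holds irrespective of the maximum degree, so no control over $\Delta$ is required for this statement.
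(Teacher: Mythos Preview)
Your argument is correct and matches the paper's intent: the paper does not give a separate proof of the corollary, saying only that it ``immediately follows'' from \Cref{thm:hardness}, and your derivation (take $k=0$ and use $n \le 2m$ on graphs without isolated vertices) is exactly the one-line parameter count that makes this immediate. Your aside that the hard instances produced by the reduction already have no isolated vertices is also accurate, so the preprocessing step, while harmless, is not strictly needed.
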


The hardness in~\Cref{thm:hardness} also rules out the
existence of an $O(\sum_{v\in V} \deg(v)^2)$ time algorithm.
We prove our hardness results by reducing
the orthogonal vectors problem~\cite{williams2005new} to computing
a minimum degree ordering of a special graph constructed using building blocks
called \emph{$U$-fillers}.
One of our main contributions is a simple
recursive algorithm for constructing $U$-filler graphs that satisfy the challenging
sparsity and degree requirements necessary for the fine-grained reduction.
In particular, these $U$-fillers correspond to pathological sparsity patterns,
and hence adversarial linear systems,
that cause any minimum degree algorithm to
exhibit worst-case performance (i.e., to output Cholesky factors with
$\tilde\Omega(n^2)$ nonzeros).
These graphs are of independent interest and could be useful in
lower bounds for other greedy algorithms.

\subsection{Related Works}

Computing an elimination ordering that minimizes fill-in
is an NP-complete problem closely related to
chordal graph completion~\cite{yannakakis1981computing}.
Agrawal, Klein, and Ravi~\cite{agrawal1993cutting}
gave the first approximation algorithm for the minimum fill-in problem,
building on earlier heuristics by George~\cite{george1973nested}
and by Lipton, Rose, and Tarjan~\cite{lipton1979generalized}.
Natanzon, Shamir, and Sharan~\cite{natanzon2000polynomial}
later developed the first algorithm to approximate the minimum possible fill-in
to within a polynomial factor.
There has since been a wealth of recent results on
fixed-parameter tractable algorithms~\cite{kaplan1999tractability,fomin2013subexponential,cao2016chordal,bliznets2018subexponential}
and the conditional hardness of minimizing fill-in~\cite{wu2014inapproximability,bliznets2016lower,cao2017minimum,bergman2019minimum}.

Due to this computational complexity,
we rely on the practicality and efficiency of greedy heuristics.
In particular,
the multiple minimum degree algorithm (MMD) of Liu~\cite{liu1985modification}
and
the approximate minimum degree algorithm (AMD) of Amestoy, Davis, and Duff~\cite{amestoy1996approximate}
have been the mainstays for solving sparse linear systems of equations.
These algorithms, however, have some drawbacks.
MMD eliminates a maximal independent set of minimum degree vertices
in each step, but it runs in $O(n^2 m)$ time and this is known to be tight~\cite{heggernes2001computational}.
On the other hand, AMD is a single elimination algorithm
that runs in $O(nm)$ time, but achieves its speedup by
using an easy-to-compute upper bound
as a proxy to the true vertex degrees.
While many variants of the minimum degree algorithm exist,
an algorithm that computes an exact minimum degree ordering
with time complexity better than $O(n^3)$ has never been proven.
Therefore, our contributions are a significant step forward in
the theory of minimum degree algorithms.

There have also been other major advancements in the theory of minimum degree
algorithms recently.
Fahrbach et al.~\cite{fahrbach2018graph} designed an algorithm that computes a
$(1+\varepsilon)$-approximate greedy minimum degree elimination ordering
in $O(m \log^5(n) \varepsilon^{-2})$ time.
Although this result is a significant theoretical milestone,
it is currently quite far from being practical.
Ost, Schulz, and Strash~\cite{ost2020engineering} recently 
gave a comprehensive set of vertex elimination rules that
are to be used before applying a greedy reordering algorithm
and never degrade the quality of the output.
The minimum degree heuristic has also appeared in algorithms for
graph compression and coarsening~\cite{ashcraft1995compressed, chierichetti2009compressing, fahrbach2020faster}.

\section{Preliminaries}
\label{sec:preliminaries}

\subsection{Fill Graphs and Minimum Degree Orderings}

For an undirected, unweighted graph $G=(V,E)$, let $N(u) = \{v \in V :
\{u,v\} \in E\}$ denote the neighborhood of vertex~$u$ and $\deg(u) = |N(u)|$
denote its degree.
We overload the notation $N(U) = \bigcup_{u \in U} N(u)$ to be
the neighborhood of a set of vertices.
For two graphs $G_1 = (V_1, E_1)$ and $G_2 = (V_2,E_2)$,
define their union to be $G_1 \cup G_2 = (V_1 \cup V_2, E_1 \cup E_2)$.
For a given set of vertices $U$, let $K_U$
be the complete graph with vertex set $U$.

Now we introduce the idea of fill graphs.
Our notation extends that of
Gilbert, Ng and Peyton~\cite{gilbert1994efficient}.
We use the shorthand $[n]=\{1,2,\dots,n\}$ throughout the paper.

\begin{definition}
For any undirected graph $G=(V,E)$ and subset $U \subseteq V$,
the \emph{fill graph} $G^+_{U} = (V^+_U, E^+_U)$ is the
graph resulting from eliminating the vertices in $U$.
Its vertex set is $V^+_{U} = V \setminus U$, and
an edge $\{u,v\} \in E^+_{U}$ if and only if
there exists a path $(u,x_1,\dots,x_t,v)$ in $G$ such that
$x_i \in U$ for all $i \in [t]$.
\end{definition}

\noindent
Characterizing fill-in 
by $uv$-paths through eliminated vertices
allows us to compute the \emph{fill degree}
of a vertex in any partially eliminated state~$U$
without explicitly computing the eliminated matrix.
For a fill graph~$G_U^+$,
we avoid double subscripts and use the analogous notation
$\deg_{U}^+(v) = \deg_{G_{U}^+}(v)$
and $N^+_{U}(v) = N_{G_{U}^+}(v)$ to denote the
degree and neighborhood of a vertex $v \in V_{U}^+$.
Alternatively, we can use tools from linear algebra and view
$G^+_U$ as the nonzero
structure of the Schur complement of the adjacency matrix $\mat{A}(G)/U$.

An elimination ordering $p = (v_1,v_2,\dots,v_n)$
naturally induces a sequence of fill graphs $(G_0^+, G_1^+, \dots, G_n^+)$,
where $G_0^+ = G$ and $G_n^+$ is the empty graph.
Let $\deg_{i}^+(v)$ denote the degree of vertex $v \in V^+_{i}$ in the $i$-th
fill graph of this sequence.
This allows us to define a minimum degree elimination ordering.

\begin{definition}
A \emph{minimum degree elimination ordering} is permutation of the vertices 
$(v_1,v_2,\dots,v_n)$ such that
$v_i \in \argmin_{v \in V_{i-1}^+} \deg_{i-1}^+(v)$ for all $i \in [n]$.
\end{definition}

\noindent

\subsection{SETH-Hardness for Computing Minimum Degree Orderings}

Our lower bound for the time complexity of computing a minimum degree
elimination ordering is based on the
\emph{strong exponential time hypothesis} (SETH), which asserts that 
for every $\varepsilon > 0$, there exists an integer~$k$ such that $k$-SAT cannot be solved in $O(2^{(1-\varepsilon)n})$ time.
SETH has been tremendously useful in establishing tight conditional lower bounds for
a diverse set of problems~\cite{williams2018some}. Many of these results
rely on a fine-grained reduction to the \emph{orthogonal vectors} problem
and make use of the following theorem.

\begin{theorem}[\cite{williams2005new}]
\label{theorem:orthogonal-vectors}
Assuming SETH, for any $\varepsilon> 0$, there does not exist an $O(n^{2-\varepsilon})$
time algorithm that takes $n$ binary vectors with $\Theta(\log^2 n)$ bits
and decides if there is an orthogonal pair.
\end{theorem}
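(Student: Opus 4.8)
The plan is to prove this via the now-standard fine-grained reduction from $k$-SAT to orthogonal vectors, the so-called \emph{split-and-list} technique. Suppose for contradiction that for some $\varepsilon > 0$ there is an $O(n^{2-\varepsilon})$-time algorithm that decides, given $n$ binary vectors of dimension $\Theta(\log^2 n)$, whether some pair is orthogonal. I would use it to refute SETH by transforming an arbitrary $k$-SAT formula into an equivalent orthogonal vectors instance.

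First I would preprocess the formula. Given a $k$-SAT instance $\varphi$ on $N$ variables with $M$ clauses, apply the sparsification lemma of Impagliazzo, Paturi, and Zane to replace $\varphi$ by $2^{\delta N}$ formulas $\varphi_1, \dots, \varphi_{2^{\delta N}}$ on the same variables, each with only $M' = O_{k,\delta}(N)$ clauses, such that $\varphi$ is satisfiable if and only if some $\varphi_i$ is; here $\delta > 0$ is a small parameter fixed at the end. This step is what lets us drive the vector dimension down to $O(\log n)$ rather than an uncontrolled $\mathrm{poly}(\log n)$.

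Next, for each sparsified formula $\psi := \varphi_i$, split its variables into two halves $X_1, X_2$ of size $N/2$. For every one of the $2^{N/2}$ assignments $\alpha$ to $X_1$ I create a vector $a_\alpha \in \{0,1\}^{M'}$ with $a_\alpha[j] = 0$ if $\alpha$ already satisfies clause $C_j$ and $a_\alpha[j] = 1$ otherwise, and symmetrically a vector $b_\beta \in \{0,1\}^{M'}$ for each assignment $\beta$ to $X_2$; collect the $a_\alpha$ into a set $A$ and the $b_\beta$ into a set $B$. The correctness claim is that $\psi$ is satisfiable if and only if there exist $\alpha,\beta$ with $a_\alpha \cdot b_\beta = 0$: the inner product vanishes exactly when every clause is satisfied by at least one of the two halves, i.e., by the combined assignment $(\alpha,\beta)$. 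Verifying this equivalence carefully — that coordinate $j$ contributes to the product precisely when clause $C_j$ is left unsatisfied by both halves — is the one place the argument needs genuine (though short) attention.

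Finally I would tally the parameters. The construction yields an instance with $n = 2 \cdot 2^{N/2}$ vectors of dimension $M' = O(N) = O(\log n)$, which I pad with zero coordinates up to dimension $\Theta(\log^2 n)$ without altering any inner product. Invoking the hypothesized algorithm on each of the $2^{\delta N}$ instances costs $2^{\delta N} \cdot O(n^{2-\varepsilon}\cdot \mathrm{polylog}\,n) = 2^{(1 - \varepsilon/2 + \delta + o(1))N}$; choosing $\delta < \varepsilon/2$ solves $k$-SAT, for \emph{every} $k$, in time $O(2^{(1-\varepsilon')N})$ with $\varepsilon' = \varepsilon/2 - \delta - o(1) > 0$, contradicting SETH. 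The main obstacle is really just this dimension bookkeeping: the naive split-and-list produces vectors whose dimension equals the clause count, which for general $k$ is a polynomial in $N = \Theta(\log n)$ with a $k$-dependent exponent, so the sparsification step is essential to obtain uniformly $\mathrm{polylog}(n)$ (in fact $O(\log n)$) dimension; the rest of the reduction is routine.
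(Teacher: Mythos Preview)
The paper does not prove this theorem at all; it is quoted from \cite{williams2005new} as a black box and used only as input to the reduction in \Cref{sec:hardness}. Your proposal reproduces the standard proof (Williams' split-and-list reduction, sharpened with the Impagliazzo--Paturi--Zane sparsification lemma so that the clause count, and hence the vector dimension, is $O_{k,\delta}(N)=O(\log n)$), and the outline is correct.

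There is one genuine gap worth naming. Your reduction produces a \emph{bipartite} orthogonal-vectors instance---sets $A$ and $B$ with the desired pair required to straddle them---whereas the theorem as stated concerns a single pool of $n$ vectors. You implicitly take the union $A\cup B$ (``the construction yields an instance with $n=2\cdot 2^{N/2}$ vectors''), but then a spurious orthogonal pair \emph{within} $A$ can occur even when $\psi$ is unsatisfiable: two distinct assignments $\alpha,\alpha'$ to $X_1$ may between them satisfy every clause without either extending to a satisfying assignment (e.g.\ $\psi=x_1\wedge\neg x_1$ already exhibits this). The fix is routine---append two extra coordinates, $(1,0)$ to every vector in $A$ and $(0,1)$ to every vector in $B$, so that any orthogonal pair is forced to be bichromatic---but it needs to be said for the argument to go through.
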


\section{A Fast Minimum Degree Algorithm}
\label{sec:algorithm}

We present a new combinatorial algorithm called \MainAlg for
computing minimum degree orderings.
Its key feature is that it maintains the fill graph using
an implicit representation of fill-in
together with an explicit graph representation.
This combination allows us to reduce the number of redundant edge insertions to the fill graph.
Using a specialized amortized analysis, we prove the following theorems.

\TheoremAlgorithm*
\TheoremAlgorithmOutputSensative*

\subsection{The Algorithm}
We begin by describing an alternative approach for representing the
fill graphs $G_{i}^+$ as vertices are eliminated.
This \emph{hypergraph representation} stores \emph{hyperedges}
$U_1,U_2,\dots,U_k \subseteq V_{i}^+$
such that $G_{i}^{+}=K_{U_1}\cup K_{U_2}\cup \cdots \cup K_{U_k}$
at the end of each iteration.
Variants of this have frequently been used in previous literature on the
min-degree algorithm.
Our presentation closely follows that of George and
Liu~\cite{george1989evolution}, albeit with different terminology.

Given a graph $G$, we construct the initial hypergraph representation
consisting of all hyperedges $\{u,v\}$, for every edge $\{u,v\} \in E(G)$.
Next, we consider how to update the hypergraph representation
as vertices are eliminated.
Suppose we wish to eliminate a vertex $v$.
Let $U_1,U_2,\dots,U_t$ be precisely the hyperedges that contain~$v$.
Construct $W=(U_1\cup U_2\cup \cdots\cup U_t)\setminus \{v\}$.
Let $G'$ be the graph represented by the current hypergraph representation.
Then $W$ is precisely the neighborhood of $v$ in $G'$.
It follows that the fill graph obtained by eliminating $v$
is represented by removing the hyperedges $U_1,U_2,\dots,U_t$ and adding the hyperedge $W$.


Our algorithm to find a minimum degree ordering
can be summarized as maintaining both a hypergraph representation
and an adjacency matrix of the fill graph.
The adjacency matrix is used to efficiently compute the minimum degree vertex
in each step,
and the hypergraph representation is used to
reduce the number of redundant updates to the adjacency matrix.

When eliminating the vertex $v$ in the current fill graph $G_{i}^+$,
we find the hyperedges $U_1,U_2,\dots,U_t \subseteq V_{i}^+$ containing $v$, as described above.
We also compute $W=(U_1\cup U_2\cup \cdots \cup U_t)\setminus\{v\}=N_{i}^+(v)$.
We must add edges to the fill graph so that it contains the clique $K_W$.
Therefore, it is enough for our algorithm to try adding the edges in $K_W$
that are not in $K_{U_1}\cup K_{U_2}\cup  \cdots\cup K_{U_t}$,
since these are already guaranteed to be in $G_{i}^+$.

Below we give high-level pseudocode to describe this algorithm.
Although the implementation details of the algorithm are very important
to the efficiency of the algorithm, we defer these discussions to
\Cref{subsec:complexity-analysis}.

\begin{algorithm}[H]
\caption{A fast minimum degree algorithm for producing exact elimination orderings.}
\label{alg:fast-min-degree}

\begin{algorithmic}[1]
  \Function{\MainAlg}{adjacency list for
    undirected graph $G=(V,E)$ with $\abs{V}=n$}
  \State Initialize adjacency structure $\var{fill\_graph}$ to agree with $G$
  \For{each edge $\{u,v\} \in E$}
  \State Add $\{u,v\}$ to the list of hyperedges
  \EndFor
  \State Mark all vertices as active and initialize array $\var{elimination\_ordering}$ of size $n$
  \For{$i=1$ to $n$} 
  \State Let $a$ be the active vertex with minimum degree in $\var{fill\_graph}$
  \State Deactivate $a$ and set $\var{elimination\_ordering}[i] \gets a$
  \State Initialize $W\gets \emptyset$
  \For{each hyperedge $U$ containing $a$}
  \State Remove $U$ from the list of hyperedges
  \State Set $U\gets U\setminus\{a\}$
  \State Let $X\gets W\setminus U$ and $Y\gets U\setminus W$
  \For{each pair $(x,y)$ in $X\times Y$}
  \State Add edge $\{x,y\}$ to $\var{fill\_graph}$ if not present
  \EndFor
  \For{each vertex $b \in Y$}
  \State Remove edge $\{a,b\}$ from $\var{fill\_graph}$
  \EndFor
  \State Update $W \gets W \cup Y$
  \EndFor
  \State Add $W$ to the list of hyperedges
  \EndFor
  \State \textbf{return} $\var{elimination\_ordering}$
  \EndFunction
\end{algorithmic}
\end{algorithm}

We claim that the algorithm is correct and maintains the desired state at the
end of each iteration.

\begin{restatable}[]{lemma}{LemmaCorrectness}
  \label{lem:correctness}
  \alg{FastMinimumDegree} produces a correct minimum degree ordering $\var{elimination\_ordering}$.
  Furthermore, suppose  $(G_0^+,G_1^+,\dots,G_n^+)$ is the sequence of fill graphs
  induced by $\var{elimination\_ordering}$.
  Then, at the end of each iteration $i \in [n]$ of the \alg{FastMinimumDegree} algorithm:
  \begin{enumerate}
  \item The state of $\var{fill\_graph}$ corresponds to the fill graph $G_{i}^+$.
  \item The list of hyperedges $U_1,U_2,\dots,U_k$
    satisfies $K_{U_1}\cup K_{U_2}\cup \cdots \cup K_{U_k}=G_i^+$
      (i.e., the hypergraph representation is maintained).
  \end{enumerate}
\end{restatable}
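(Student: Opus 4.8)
The plan is to prove the two structural invariants (1) and (2) together by induction on the iteration index $i$, and then to deduce the first sentence of the lemma — correctness of $\var{elimination\_ordering}$ — as an immediate corollary. Indeed, if at the \emph{start} of iteration $i$ (equivalently, the end of iteration $i-1$) the active part of $\var{fill\_graph}$ equals $G_{i-1}^+$, then the vertex $a$ chosen as the active minimum-degree vertex of $\var{fill\_graph}$ satisfies $a\in\argmin_{v\in V_{i-1}^+}\deg_{i-1}^+(v)$, which is exactly the defining condition of a minimum degree ordering, and the induced fill-graph sequence is by construction $(G_0^+,\dots,G_n^+)$.

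For the base case ($i=0$, before the main loop) $\var{fill\_graph}$ agrees with $G=G_0^+$ and the hyperedge list is $\{\,\{u,v\}:\{u,v\}\in E\,\}$, whose clique union is $G$, so both invariants hold. For the inductive step I would assume both invariants at the end of iteration $i-1$. Invariant (2) then follows directly from the hypergraph-update fact established just before the algorithm: eliminating $a$ means deleting the hyperedges $U_1,\dots,U_t$ containing $a$ and inserting $W=(U_1\cup\cdots\cup U_t)\setminus\{a\}$; a quick check of the inner loop shows that, since each step performs $W\gets W\cup Y=W\cup(U_j\setminus\{a\})$, after processing all hyperedges containing $a$ we have accumulated exactly this $W$, which the algorithm then inserts. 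By the same fact $W=N_{i-1}^+(a)$. Moreover, because every $K_{U_j\setminus\{a\}}$ is a subgraph of $K_W$, one checks that $E(G_i^+)=\big(E(G_{i-1}^+)\setminus\{\text{edges incident to }a\}\big)\cup E(K_W)$. Hence to prove invariant (1) it suffices to show that the inner loop (i) deletes from $\var{fill\_graph}$ exactly the edges incident to $a$, and (ii) adds exactly the edges of $K_W$ not already present, and no others.

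For (i), write $W_{j-1}$ for the value of $W$ just before the $j$-th hyperedge $U_j$ (with $a$ already removed) is processed; the sets $Y_j=U_j\setminus W_{j-1}$ are pairwise disjoint with union $\bigcup_j U_j=W$, so each neighbor $b$ of $a$ lies in exactly one $Y_j$ and the edge $\{a,b\}$ is removed exactly once, and only such edges are removed. For (ii), take any $\{x,y\}\subseteq W$ with $x\ne y$. If $x$ and $y$ lie in a common (original) hyperedge $U_j$, then $\{x,y\}\in K_{U_j}\subseteq G_{i-1}^+$ and this edge is never touched (deletions only occur at $a$), so it remains present. Otherwise let $j(x),j(y)$ be the first indices of hyperedges containing $x$ and $y$; these are distinct (else $x,y$ would share $U_{j(x)}$), say $j(x)<j(y)$. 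When $U_{j(y)}$ is processed we have $x\in W_{j(y)-1}$ (it occurred in $U_{j(x)}$ earlier), $x\notin U_{j(y)}$ (else $x,y$ share it), and $y\in U_{j(y)}\setminus W_{j(y)-1}$, so $(x,y)\in X_{j(y)}\times Y_{j(y)}$ and the edge is inserted. Finally every inserted pair lies in $X_j\times Y_j\subseteq W\times W$, so no edge outside $K_W$ is ever added, and since additions never involve $a$ they do not interfere with (i). This closes the induction.

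The hard part is the combinatorial bookkeeping in (ii): one must argue that the family of products $\{X_j\times Y_j\}_j$, together with the edges already guaranteed inside each clique $K_{U_j}$, covers every edge of $K_W$ and creates nothing outside it — this is precisely the efficiency mechanism of the algorithm, namely that a genuinely new fill edge is detected as a crossing pair between the already-merged set and a freshly-merged hyperedge, indexed by the later of the two first-occurrence indices. Everything else (the base case, the maintenance of (2), and the reduction of (1) to (i)–(ii)) is routine verification that the pseudocode faithfully executes the hypergraph elimination step and that the degree query is taken against the correct graph.
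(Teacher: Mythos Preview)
Your proposal is correct and follows essentially the same inductive strategy as the paper's own proof: both establish the two invariants simultaneously by induction on $i$, derive the hypergraph invariant from the standard merge-and-delete update, and then verify that $\var{fill\_graph}$ is updated to $G_i^+$ by checking that exactly the edges of $K_W$ are present and those incident to $a$ are removed. Your treatment of part (ii) via the first-occurrence indices $j(x),j(y)$ is in fact more explicit than the paper's, which simply asserts that ``only edges $\{x,y\}\in X\times Y$ spanning the symmetric difference can be missing'' and that ``all edges in $K_W$ have been considered''; your argument spells out precisely why every edge of $K_W$ is either already in some $K_{U_j}$ or appears in some $X_{j}\times Y_{j}$.
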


\begin{proof}
We proceed by induction and show that these invariants hold at the end
of each iteration.
Before any vertices are eliminated (line~5)
both properties are true since $G_{0}^+ = G$.
Assume the claim as the induction hypothesis
and suppose that the algorithm begins iteration $i \in [n]$.
Clearly \MainAlg selects a minimum degree vertex in $G_{i-1}^+$
and updates $\var{elimination\_ordering}$ correctly.

Now we consider what happens when vertex $a$ is eliminated.
Let $U_1, U_2, \dots, U_t \subseteq V$ be all the hyperedges 
containing $a$ at the start of iteration $i$.
When the algorithm reaches line~19,
we have $W = (U_1 \cup U_2 \cup \dots \cup U_t) \setminus\set{a}$.
All hyperedges $U_1,U_2,\dots,U_t$ are removed from the hypergraph
representation by the end of this iteration,
and $W$ is added in their place.
Since $W = N_{i-1}^+(a)$ by the induction hypothesis,
it follows that the new list of hyperedges at the end of iteration $i$
corresponds to $G_{i}^+$ and satisfies the second property.

Ensuring that the state of $\var{fill\_graph}$ is updated
correctly requires a little more work.
By the induction hypothesis, $\var{fill\_graph}$ is equal to
$G_{i-1}^+$ at the beginning of iteration $i$.
The hypergraph representation also corresponds to $G_{i-1}^+$ at
the start of this iteration,
so it follows that for each hyperedge~$U$ that contains $a$,
the edges in $K_{U}$ are present in $\var{fill\_graph}$.
Lines 9--18 attempt to insert edges from $K_{W}$ into
$\var{fill\_graph}$ that may be missing.
In particular,
clique $K_W$ is constructed one hyperedge at a time to
reduce redundant work.
When processing the $j$-th hyperedge $U_{j}$,
we have $X = (U_1 \cup U_2 \cup \dots U_{j-1}) \setminus U_{j}$
and
$Y = U_j \setminus (U_1 \cup U_2 \cup \dots \cup U_{j-1})$
on line~13.
Only edges $\{x,y\} \in X\times Y$
spanning the symmetric difference can be
missing from $\var{fill\_graph}$,
so at the end of iteration $i$,
all edges in $K_W$ have been considered.
Finally, all edges adjacent to $a$ in $G_{i-1}^+$
are removed in lines 16--17.
Therefore, it follows that
the state of $\var{fill\_graph}$
corresponds to $G_{i}^+$ at the end of the iteration.
This proves the first property and completes the proof by induction.
\end{proof}

\subsection{Complexity Analysis}
\label{subsec:complexity-analysis}

To thoroughly investigate the time complexity of minimum degree algorithms, we
first relax any rigid space requirements~\cite{george1980minimal,heggernes2001computational}.
This allows us to more conveniently present an algorithm whose running time
matches the SETH-based lower bound in~\Cref{cor:hardness}.

Our goal is to bound the running time of \MainAlg not just by $O(nm)$, but by more
accurate bounds in terms of the total fill produced by the returned minimum
degree ordering.
Since the fill produced by the minimum degree heuristic is typically small in practice,
our analysis shows that the performance of the algorithm will often be better
than $O(nm)$.  Below are the main quantities used to describe the total fill.

\begin{definition}
  Let $E^+=\bigcup_{i=0}^{n} E_{i}^+$ be the set of all edges in the
  fill graph at some iteration, and let $m^+=\abs{E^+}$.
\end{definition}

We have that $E^+$ is precisely the edge set of the input graph, together
with all edges inserted throughout the course of the algorithm.
Clearly $(V,E^+)$ is a simple graph, and so $m^+\le n^2$.
These edges are important because they correspond to nonzero entries
in the corresponding Cholesky factor $\mat{L}$.
In fact, $m^+$ is precisely the number of nonzeros.
Intuitively, $m^+$ is the size of the ``output'' of the reordering procedure,
and is the quantity that the minimum degree heuristic is designed to minimize.

First, we claim that any adjacency structure for the evolving fill graph can
easily be extended to handle fast minimum degree queries.
This technique uses a bucket queue 
and is inspired by ideas in~\cite{matula1983smallest}.

\begin{restatable}[]{lemma}{LemmaMinDegreeQuery}
  \label{lem:min-degree-query}
  The $\var{fill\_graph}$ adjacency structure
  can be augmented to support
  the selection of the minimum degree vertex,
  increasing the total running time by at most $O(m^+)$.
\end{restatable}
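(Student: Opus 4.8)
The plan is to attach a bucket queue to the \var{fill\_graph} adjacency structure, indexed by degree. Concretely, I would maintain an array $B[0], B[1], \dots, B[n-1]$ where $B[d]$ is a doubly-linked list of the active vertices whose current degree in \var{fill\_graph} equals $d$, together with a pointer from each vertex to its position in its bucket so that moves cost $O(1)$. I also keep a running lower-bound index $d_{\min}$ on the minimum occupied bucket. The key operations to support are: (i) \emph{degree change} of a vertex $v$ by $\pm 1$ — unlink $v$ from $B[\deg(v)]$ and relink into $B[\deg(v)\pm 1]$, in $O(1)$ time; (ii) \emph{deactivate} a vertex — remove it from its bucket, $O(1)$; and (iii) \emph{find-min} — scan upward from $d_{\min}$ until a nonempty bucket is found, return any vertex there, and leave $d_{\min}$ pointing at that bucket.

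First I would argue correctness: whenever the algorithm inserts or deletes an edge of \var{fill\_graph} incident to an active vertex, it issues the corresponding degree-change operation, so the invariant "$v \in B[\deg^+_{i}(v)]$ for every active $v$" is maintained inductively across an iteration of \MainAlg (this piggybacks on \Cref{lem:correctness}, which already establishes that \var{fill\_graph} tracks $G_i^+$). Hence the vertex returned by find-min is indeed a minimum-degree active vertex, as required in line~7 of \Cref{alg:fast-min-degree}.

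Then I would bound the added running time. The number of degree-change operations is $O(1)$ per edge insertion or deletion performed by the algorithm on \var{fill\_graph}; every such edge lies in $E^+$, and each edge of $E^+$ is inserted at most once and deleted at most once, so this contributes $O(m^+)$ total. Deactivations contribute $O(n) = O(m^+)$ (we may assume $m^+ \ge n$, or handle isolated vertices separately). The subtle part — and the main obstacle — is the cost of the upward scans in find-min: a single call can sweep through many empty buckets. The standard amortized argument is that $d_{\min}$ is \emph{non-monotone} in general because eliminating a vertex and cliquing its neighborhood can raise some neighbors' degrees, pulling the true minimum back down; so I cannot simply say "$d_{\min}$ only increases." Instead I would charge scan steps to degree increases: each unit step of a scan past bucket $d$ corresponds to a bucket that was emptied since $d_{\min}$ last sat there, and a vertex leaves a bucket only via a degree change or a deactivation — so the total scan work is again dominated by the $O(m^+)$ degree-change operations plus $O(n)$ deactivations, provided I track the potential $\Phi = d_{\min} + (\text{number of active vertices})$ or an equivalent accounting. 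I would state this charging argument carefully, since it is the one place where the bound is not immediate, and conclude that the augmentation increases the total running time of \MainAlg by $O(m^+)$.
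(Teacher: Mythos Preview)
Your data structure and your bound on the number of degree-change operations match the paper exactly: a degree-indexed array of doubly-linked lists, with $O(1)$ moves per edge insertion/deletion and $O(m^+)$ such moves in total.

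Where you diverge is the analysis of the find-min scan. You maintain a pointer $d_{\min}$ and amortize the upward sweeps against degree decreases via a potential. The paper takes a more direct route: it simply restarts the scan from bucket $0$ every time. The cost of the scan in iteration $i$ is then exactly the minimum degree, i.e.\ $\deg^+_{i-1}(v_i)$ for the selected vertex $v_i$; since every edge incident to $v_i$ in any fill graph lies in $E^+$, this is at most the degree of $v_i$ in $(V,E^+)$, and summing over the $n$ eliminations gives $\sum_v \deg_{E^+}(v)=2m^+$. This sidesteps the non-monotonicity issue you flagged and needs no potential at all. Your approach is also correct, but note that the potential as you wrote it, $\Phi = d_{\min} + |\text{active}|$, has the wrong sign: $d_{\min}$ \emph{increases} during a scan, so you want $-d_{\min}$ (or equivalently charge each unit drop of $d_{\min}$ to the degree-decrease that caused it, of which there are $O(m^+)$). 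With that fix your argument goes through; the paper's version is just shorter.
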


\begin{proof}
  We maintain the data structure of $n$ doubly linked lists,
  which are ``buckets'' corresponding to degrees.
  Each vertex in the graph has a node in exactly one bucket.
  Every time we update the degree of a vertex,
  we delete it from one bucket and add it to another.
  This operation can be performed in $O(1)$ time.
  
  The total number of degree updates is also $O(m^+)$.
  For each entry of the adjacency matrix,
  there are at most two degree updates:
  one when the edge is added (line 15),
  and one when it is deleted (line 17).
  These operations are only applied to precisely
  the $O(m^+)$ entries corresponding to the edges in $E^+$.

  Selecting the minimum degree vertex $v$ is done by checking the buckets
  in order of increasing degree, until a nonempty bucket is found.
  The running time for the $i$-th step is linear in $\deg^+_i(v)$,
  which is at most the degree of $v$ in the graph $(V,E^+)$.
  So in total, the running time is $O(m^+)$.
\end{proof}

The next lemma builds on \Cref{lem:min-degree-query}
to bound the running time of the algorithm
by the total fill-in, the number of edge insertion attempts,
and the cost of edge updates.
The proof uses a simple credit analysis to show that computing all of the sets
$X$, $Y$, and $W$ over the course of the algorithm is not too expensive.

\begin{restatable}[]{lemma}{LemmaCliqueSet}
  \label{lem:clique-set}
  Let $k$ be the total number of edge insertion attempts,
  and let $c$ be the cost of querying, inserting, or deleting an edge.
  Then the algorithm can be implemented to run in $O(c(m^++k))$ time.
\end{restatable}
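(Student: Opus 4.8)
The plan is to account separately for the three kinds of work the algorithm does inside the main loop and show each is bounded by $O(c(m^+ + k))$. By \Cref{lem:min-degree-query}, the minimum-degree selection over all iterations costs $O(m^+)$, which is absorbed. The remaining work in iteration $i$ consists of: (a) iterating over the hyperedges $U_1,\dots,U_t$ containing the eliminated vertex $a$, and for each forming the sets $X = W \setminus U$ and $Y = U \setminus W$; (b) the attempted edge insertions on line~15, of which there are $k$ total by definition, each costing $c$; and (c) the edge deletions on line~17. For (c), each deletion removes a distinct edge $\{a,b\}$ of the current fill graph, and every such edge lies in $E^+$, so there are at most $O(m^+)$ deletions over the whole run, contributing $O(c\,m^+)$. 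The crux is therefore (a): bounding the cost of computing all the symmetric-difference sets $X$, $Y$, and the running union $W$.

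For (a), I would use the credit/token argument the lemma statement hints at. Computing $X = W \setminus U$ and $Y = U \setminus W$ for a hyperedge $U$ can be done in time $O(|U| + |W_{\mathrm{prev}}|)$ using a hash set for the current $W$ — but $|W|$ can grow large, so we cannot naively charge $|W|$ to each of the $t$ hyperedges. Instead, observe that when we process hyperedge $U_j$, the set $Y = U_j \setminus (U_1 \cup \cdots \cup U_{j-1})$ consists of vertices \emph{newly} added to $W$ in this iteration, so $\sum_j |Y_j| = |W| \le \deg^+_{i-1}(a)$, and the work proportional to $Y$ telescopes. The problematic term is the work proportional to $|U_j|$ itself (to scan $U_j$ and test membership in $W$). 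Here I would place a token on each vertex of each hyperedge at the moment the hyperedge is created: a hyperedge $W$ added at the end of an iteration receives $|W|$ tokens. When that hyperedge is later consumed, its tokens pay for the $O(|U_j|)$ scan. The initial hyperedges (the $m$ original edges) carry $O(m)$ tokens total. So it remains to bound the total number of tokens ever issued, i.e.\ $\sum$ over all created hyperedges $W$ of $|W|$.

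The key inequality is that whenever a hyperedge $W$ is created, every vertex in $W$ is a neighbor of the just-eliminated vertex $a$, and moreover each edge $\{a, b\}$ with $b \in W$ is an edge of the fill graph $G_{i-1}^+$, hence an edge of $(V, E^+)$. These edges $\{a,b\}$ are distinct across iterations (once $a$ is eliminated it never reappears), so $\sum_{W \text{ created}} |W| \le |E^+| = m^+$... wait — that over-counts if the same edge appears in multiple hyperedges at once, but each such edge $\{a,b\}$ is counted once per \emph{creation event}, and there is one creation event per eliminated vertex $a$, charging $\deg^+_{i-1}(a)$ tokens, with $\{a,b\} \in E^+$; summing over $i$ gives $\sum_i \deg^+_{i-1}(v_i) \le m^+$ since these are edges of distinct eliminated endpoints in the simple graph $(V,E^+)$. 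Thus the total token budget is $O(m^+)$, and all of (a) costs $O(m^+)$ — actually $O(c \cdot m^+)$ if hash operations cost $c$, but the set manipulations in (a) are plain array/hash-set operations independent of the $\var{fill\_graph}$ structure, so a factor of $c$ is only needed for (b) and (c). Combining (a)–(c) and the selection cost yields the claimed $O(c(m^+ + k))$ bound. I expect the token-counting in step~(a) — specifically arguing that the total size of all hyperedges ever created is $O(m^+)$ rather than something like $O(nm^+)$ — to be the main obstacle, since one must carefully use that each hyperedge-creation event corresponds bijectively to an elimination and its size is the fill-degree of a vertex that is thereafter gone.
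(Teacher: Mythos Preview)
Your overall scaffolding matches the paper's: the credit/token accounting on hyperedges, the $O(m^+)$ bound on deletions, and the observation that $\sum_i \deg^+_{i-1}(v_i) = m^+$ are all correct and essentially identical to the paper's argument. But there is a real gap in step~(a).

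You write that computing $X_j = W_j \setminus U_j$ and $Y_j = U_j \setminus W_j$ costs $O(|U_j| + |W_j|)$, correctly flag that the $|W_j|$ term is the dangerous one, and then never bound it. The telescoping $\sum_j |Y_j| = |W|$ is true but irrelevant here: it bounds the total \emph{output size} of the $Y_j$'s, not the cost of producing $X_j$. And your tokens pay only for the $O(|U_j|)$ scan. Enumerating $X_j$ still requires walking over $W_j$ (given a membership oracle for $U_j$), and $\sum_j |W_j|$ over a single iteration can be as large as $t \cdot \deg^+_{i-1}(a)$, which is not covered by either your token budget or by $k$. Concretely, whenever $Y_j = \emptyset$ (i.e.\ $U_j \setminus \{a\} \subseteq W_{j-1}$, which happens routinely), there are zero insertion attempts from that hyperedge, yet you would still pay $|W_j|$ to build $X_j$.

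The missing ingredient, which the paper supplies, is an early-exit plus a charge to $k$: first compute $Y_j$ in $O(|U_j|)$ time using the maintained membership array for $W$; if $Y_j = \emptyset$, skip $X_j$ entirely. Otherwise compute $X_j$ in $O(|W_j| + |U_j|)$ time, and observe that
\[
|W_j| + |U_j| \le (|W_j \setminus U_j| + |U_j|) + |U_j| = |X_j| + 2|U_j| \le |X_j \times Y_j| + 2|U_j|,
\]
where the last inequality uses $|Y_j| \ge 1$. The $|X_j \times Y_j|$ term is exactly the number of insertion attempts generated by this hyperedge, so it is absorbed into $k_i$, and the $|U_j|$ term is paid by your tokens. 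With this fix, your argument goes through and yields the paper's bound.
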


\begin{proof}
  Let $(v_1,v_2,\dots,v_n)$ be the ordering found by the algorithm.
  The number of edge removals performed in iteration $i$ is $\deg^+_{i-1}(v_i)$.
  Summed across all iterations, the total number of edge removals is $O(m^+)$.
  The other queries and insertions to the adjacency structure happen
  at most $k$ times.
  Therefore, the total running time of adjacency structure operations is $O(c(m^++k))$.

  We claim that the rest of the algorithm can be implemented to run in a total
  of $O(m^++k)$ time.
  The first step of this is an amortized analysis that assigns $\abs{U}$ credits
  to each hyperedge $U$ when it is created, so that when the hyperedge $U$ is removed,
  the algorithm can afford an extra running time of $O(\abs{U})$.
  The initial hypergraph requires a total of $O(m)$ credits.
  
  Consider iteration $i$ of the algorithm,
  and let $U_1,U_2,\dots,U_t$ be the hyperedges containing $v_i$.
  The set $W$ produced by the algorithm is, by the end of the iteration,
  equal to $(U_1\cup U_2\cup \cdots\cup U_t)\setminus \set{v_i}=N_{i}^+(v_i)$.
  So, the hyperedge added in iteration $i$ has size $\deg^+_i(v_i)$,
  which summed over all iterations, requires $O(m^+)$ credits.
  This amortized analysis assigns $\sum_{j=1}^t\abs{U_j}$ credits
  to the iteration where $U_1,U_2,\dots,U_t$ are the hyperedges
  being manipulated and removed.
  
  Therefore, it suffices to show that iteration $i$ runs in $O(\sum_{j=1}^t\abs{U_j}+k_i)$
  time, where $k_i$ is the number of edge insertion attempts in iteration $i$.
  First, consider the loop over the hyperedges $U_1,U_2,\dots,U_t$.
  For $j\in [t]$, let $W_j$ be the set $W$ at the beginning of the iteration
  that considers $U_j$.
  Most of the operations in the loop clearly run in $O(\abs{U_j})$ time.
  
  The exception is the computation of $X_j=W_j\setminus U_j$ and $Y_j=U_j\setminus W_j$.
  In general, finding $A\setminus B$ when $A$ and $B$ have values in $[n]$
  is done by having a global array of size $n$
  modified to represent the contents of $B$.
  Then $A\setminus B$ can be found in $O(\abs{A})$ time.
  This also requires $O(\abs{B})$ time to modify and reset the relevant array entries.
  Our implementation maintains such an array that represents $W$
  as it changes over all $t$ iterations, in a total of $\abs{N_i^+(v_i)}$ time.
  This allows $Y_j$ to be computed in $O(\abs{U_i})$ time.

  If $Y_j$ is empty, our algorithm can end iteration $j$ of the inner loop early.
  Otherwise, the number of edge insertion attempts in this iteration
  is at least $\abs{X_j\times Y_j}\ge \abs{X_j}$.
  In this case, the set $X_j$ is found in $O(\abs{W_j}+\abs{U_j})$ time.
  Note that
  \[
    \abs{W_j}+\abs{U_j}\le (\abs{W_j\setminus U_j}+\abs{U_j})+\abs{U_j}=O(\abs{X_j}+\abs{U_j}).
  \]
  So, the running time of these operations summed over all iterations $j\in [t]$ is $O(\sum_{j=1}^t\abs{U_j}+k_i)$.

  Finally, we consider the hypergraph data structure used
  to efficiently determine the hyperedges that contain a vertex $a$,
  and how this data structure is updated throughout the algorithm.
  In the implementation, each hyperedge ever created
  has a marker that determines whether it is valid or invalid
  (i.e., whether the hyperedge has since been removed).
  There is also a array of lists that determines,
  for each vertex $a$,
  the list of all hyperedges (valid or invalid) that contain it.
  This data structure is efficient to use and maintain.
  When a hyperedge $U$ is inserted,
  we iterate through all $b\in U$ and add a pointer to $U$
  to the list of vertex $b$.
\end{proof}

The bottleneck of the algorithm is the total number of attempted
edge insertions across all vertex pairs $(x,y)\in X\times Y$.
A trivial bound on the total running time of this step is $O(n^3)$.
Improving on this requires a careful amortized analysis.
Note that the algorithm may attempt to insert
an edge between the same pair of vertices multiple times.
Our approach is to investigate and bound
precisely how many times an edge in $E^+$ is attempted to be inserted.
The lemma below is the key technical result in the analysis of the algorithm.

\begin{lemma}
  \label{lem:sum-min-degree-bound}
  The number of edge insertion attempts is at most $\sum_{\set{u,v}\in E^+}\min(\deg(u),\deg(v))$,
  where the $\deg(v)$ is the degree of $v$ in the original graph.
\end{lemma}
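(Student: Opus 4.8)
The plan is to charge each edge insertion attempt to one of its endpoints in such a way that no vertex $v$ is charged more than $\deg(v)$ times for any fixed edge $\{x,y\}$ that gets considered, and crucially, that the total charge to $\{x,y\}$ is at most $\min(\deg(x),\deg(y))$. The natural idea is this: whenever the algorithm attempts to insert edge $\{x,y\}$ during the elimination of some vertex $a$, we have $x,y \in N_{i}^+(a)$, meaning both $x$ and $y$ were reachable from $a$ via already-eliminated vertices. I want to show that each such attempt can be injectively associated with a \emph{distinct} original neighbor of $x$ (and symmetrically of $y$), which would cap the number of attempts on $\{x,y\}$ by $\deg(x)$ and by $\deg(y)$ simultaneously, hence by the minimum.

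The key structural observation to exploit is how the hyperedge representation evolves. When $\{x,y\}$ is attempted during the elimination of $a$, it is because $x$ and $y$ lie in the symmetric difference $X_j \times Y_j$ for some hyperedge $U_j \ni a$ — that is, one of them (say $y$) just entered the accumulating set $W$ via $U_j$, while $x$ was already there from an earlier hyperedge $U_{j'}$, $j' < j$. So the attempt is witnessed by a pair of distinct hyperedges, both containing $a$, one contributing $x$ and one contributing $y$. First I would set up the invariant, already essentially proved in Lemma~\ref{lem:correctness}, that the hyperedges containing a vertex $v$ at any point correspond to the "reasons" $v$ is adjacent to things; then I would try to track, for each vertex $x$, the sequence of hyperedges it has belonged to over the course of the algorithm, and argue that each time $x$ participates in a fresh insertion attempt for some edge $\{x,y\}$, it is because of a hyperedge containing $x$ that is being destroyed — and the number of hyperedges $x$ ever belongs to is controlled by $\deg(x)$, since each hyperedge containing $x$ can be traced back, through merges, to an original edge at $x$.

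More precisely, I would prove a counting lemma of the form: over the whole execution, the total number of (vertex, hyperedge) incidences $\{(x, U) : x \in U\}$ created is $O(m)$, and in fact for each fixed $x$ the number of hyperedges $x$ is ever placed into is at most $\deg(x)$ — because $x$ starts in exactly $\deg(x)$ singleton-edge hyperedges, and every later hyperedge containing $x$ is formed by $\emph{merging}$ a collection of hyperedges that already contained $x$ into one, which does not increase that count. Then, for edge $\{x,y\}$, every insertion attempt happens at an elimination step where some hyperedge $U \ni x$ and some hyperedge $U' \ni y$ (with $U \neq U'$, both containing the eliminated vertex $a$) are present; I want to charge the attempt to, say, the later-in-the-loop of these, and argue the charge is injective into the set of hyperedges containing $x$ (resp. $y$) that get removed — each removed hyperedge containing $x$ absorbs at most... hmm, here is where care is needed: a single removal of a hyperedge $U \ni x$ could generate attempts for many different $y$'s. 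So charging to hyperedges-containing-$x$ alone is too coarse; I need the charge to land on the hyperedge containing the \emph{other} endpoint.

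The resolution I would pursue: fix edge $\{x,y\}$ and look only at attempts on this particular edge. Each such attempt occurs at the elimination of some $a$ where $x$ and $y$ both appear, entering $W$ through different hyperedges $U^x \ni x$, $U^y \ni y$. Claim: the hyperedge $U^y$ (the one through which $y$ enters $W$, when $x$ is already in $W$) is distinct for distinct attempts on $\{x,y\}$, \emph{and} once $U^y$ is used this way it is removed and never reused. This gives an injection from attempts-on-$\{x,y\}$ into hyperedges that ever contain $y$, which is $\le \deg(y)$; by symmetry (reversing the roles, i.e. the attempt could equally be viewed with $y$ already in $W$ and $x$ entering) we also get $\le \deg(x)$. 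The main obstacle — and the step I expect to need the most care — is justifying this injectivity cleanly: I must rule out that the same pair $\{x,y\}$ is attempted "via $U^y$" more than once, which requires arguing that after such an attempt, either the edge $\{x,y\}$ is now present (so future attempts on it are still \emph{attempts}, so this doesn't immediately help) or that $U^y$ is consumed. The honest fix is probably to charge not to "hyperedges containing $y$" globally but to the \emph{specific} merge events, and to invoke a potential/credit argument: sum over all elimination steps of $\sum_j |Y_j|$ telescopes against the $\deg(\cdot)$-bounded hyperedge-incidence budget, with the per-edge $\min$ falling out because each attempt on $\{x,y\}$ is simultaneously a "$y$ just joined" event relative to $x$ and an "$x$ just joined" event relative to $y$, and both event-counts are bounded by the respective original degrees via the no-increase-on-merge property. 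I would write the argument as an amortized analysis over these join events, carefully tracking that hyperedge membership for a fixed vertex is monotone non-increasing in count across merges and initialized at its original degree.
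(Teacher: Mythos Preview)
Your proposal circles the right mechanism but contains a concrete gap that prevents it from going through as written.

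The specific error is the claim that ``the number of hyperedges $y$ is ever placed into is at most $\deg(y)$.'' What is true is that the \emph{current} count of hyperedges containing $y$ is non-increasing and starts at $\deg(y)$; but the \emph{lifetime} total of distinct hyperedges that ever contain $y$ can be arbitrarily larger, since every merge step touching $y$ creates a brand-new hyperedge containing $y$. So your injection of attempts on $\{x,y\}$ into ``hyperedges that ever contain $y$'' does not land in a set of size $\le \deg(y)$. Your proposed fallback potential---the count of hyperedges containing $y$---has the same problem in a different guise: it need not strictly decrease at an attempt on $\{x,y\}$ (it stays constant whenever exactly one of the merged hyperedges contains $y$), so it cannot bound the number of such attempts. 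Relatedly, your symmetry step (``the attempt could equally be viewed with $y$ already in $W$ and $x$ entering'') is not valid: in a given attempt the roles are fixed by which vertex lies in $X_j$ versus $Y_j$, so at best you would get two bounds on two disjoint subsets of attempts, yielding $\deg(x)+\deg(y)$ rather than the minimum.

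The missing idea---and this is exactly what the paper does---is to refine the potential to $f(u,v)=\#\{\text{hyperedges containing }u\text{ but not }v\}$. This quantity is bounded by $\deg(u)$ initially, is genuinely monotone non-increasing under merges (a merged hyperedge containing $u$ but not $v$ must come from at least one old hyperedge with that property), and---the key point you were groping toward---strictly decreases at \emph{every} attempt on $\{u,v\}$, no matter which of $u,v$ is the ``entering'' vertex, because in either case a hyperedge containing $u$-but-not-$v$ is being absorbed into one containing $v$. Your object $U^y$ is in fact a hyperedge of exactly this type; you just need the ``but not $x$'' qualifier to make the potential argument close. With that single refinement, your approach collapses into the paper's proof.
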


\begin{proof}
  Each $\set{u,v}$ not in $E^+$ is never inserted by the algorithm.
  So, it suffices to show that for all $\set{u,v}\in E^+$,
  the insertion of $\set{u,v}$ is attempted at most $\min(\deg(u),\deg(v))$ times.
  Suppose $\set{u,v}\in E^+$, and assume without loss of generality
  that $\deg(u)=\min(\deg(u),\deg(v))$.
  Let $f(u,v)$ be the number of hyperedges that contain $u$ but not $v$.
  The quantity changes over time as the algorithm progresses.
  
  First, we claim that until $u$ or $v$ is eliminated,
  $f(u,v)$ only decreases over time.
  This is because the only operations done to hyperedges
  is to merge them and to delete eliminated vertices.
  Merging hyperedges does not increase the number that contain $u$ but not $v$,
  so until $u$ or $v$ is eliminated, $f(u,v)$ cannot increase.

  Next, we claim that for each time the algorithm attempts to insert $\set{u,v}$,
  $f(u,v)$ strictly decreases.
  This is because an insertion attempt only happens when a set containing $u$
  but not $v$ is merged with a set that contains $v$.
  So, after the merge operation, $f(u,v)$ has decreased by at least one.

  Finally, note that when all of the hyperedges are first initialized,
  $f(u,v)$ is at most $\deg(u)$.
  All insertion attempts for $\set{u,v}$ must happen before $u$ or $v$ is eliminated,
  and during this time, $f(u,v)\ge 0$.
  Thus, there are at most $\deg(u)=\min(\deg(u),\deg(v))$ insertion attempts
  for the edge $\set{u,v} \in E^+$.
\end{proof}

\begin{corollary}
  \label{cor:constant-degree}
  There are $O(\Delta m^+)$ edge insertion attempts,
  where $\Delta$ is the maximum degree of the input graph.
\end{corollary}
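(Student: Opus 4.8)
The plan is to derive this directly from \Cref{lem:sum-min-degree-bound}, which already does all the combinatorial work. That lemma bounds the total number of edge insertion attempts by $\sum_{\set{u,v}\in E^+}\min(\deg(u),\deg(v))$, where each $\deg(\cdot)$ is the degree in the \emph{original} graph. So the only remaining step is to replace each summand by the worst case. First I would note that for every edge $\set{u,v}\in E^+$ we have $\min(\deg(u),\deg(v))\le \deg(u)\le \Delta$, since $\Delta$ is by definition the maximum degree of the input graph. Then I would sum this bound over all $m^+ = \abs{E^+}$ edges to conclude that the number of edge insertion attempts is at most $\sum_{\set{u,v}\in E^+}\Delta = \Delta\, m^+$.

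There is essentially no obstacle here; the corollary is a one-line weakening of \Cref{lem:sum-min-degree-bound}. The only thing worth being careful about is that the degrees appearing in \Cref{lem:sum-min-degree-bound} are measured in the original graph $G$ (not in any intermediate fill graph), which is exactly the same graph whose maximum degree $\Delta$ is, so the substitution $\min(\deg(u),\deg(v))\le\Delta$ is valid with no caveat. Combined with \Cref{lem:clique-set} (taking $k = O(\Delta m^+)$ and $c = O(1)$ for a suitable adjacency structure), this also feeds into the running-time bound $O(\Delta m^+)$ claimed in \Cref{thm:algorithm-output-sensative}.
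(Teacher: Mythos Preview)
Your proposal is correct and matches the paper's approach exactly: the paper states \Cref{cor:constant-degree} as an immediate corollary of \Cref{lem:sum-min-degree-bound} without a separate proof, and your one-line bound $\min(\deg(u),\deg(v))\le\Delta$ summed over $E^+$ is precisely the intended argument. Your additional remark about feeding this into \Cref{lem:clique-set} is also consistent with how the paper uses the corollary in the proof of \Cref{thm:algorithm-output-sensative}.
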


We note that the approximate minimum degree algorithm (AMD)
of Amestoy, Davis, and Duff~\cite{amestoy1996approximate}
also emits a tighter time complexity of $O(m^+)$
on bounded-degree graphs~\cite{heggernes2001computational}.
This is one of the key reasons why it is exceptionally useful for reordering
linear systems of equations on grid graphs.

Now we present a useful folklore result that allows us to make the sum over
minimum degrees expression
in \Cref{lem:sum-min-degree-bound} more comprehensible.
This inequality leverages the existence of a structured edge orientation of its
input graph
and holds for any nonnegative assignment to the vertices,
including the degree function.

\begin{restatable}[]{fact}{FactVertexFunction}
  \label{fact:vertex-function}
  Let $G=(V,E)$ be a simple graph and $m=\abs{E}$.
  For any vertex function $f : V\to \R_{\ge 0}$, we have
  \[
    \sum_{\set{u,v}\in E}\min\parens*{f(u),f(v)}
    \le \sqrt{2m}\sum_{v\in V} f(v).
  \]
\end{restatable}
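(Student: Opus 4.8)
The plan is to prove the inequality $\sum_{\{u,v\}\in E}\min(f(u),f(v)) \le \sqrt{2m}\sum_{v\in V}f(v)$ by exhibiting a good orientation of the edges of $G$ and charging each edge to its tail. First I would orient every edge so that the out-degree of every vertex is at most $\sqrt{2m}$. Such an orientation exists: the standard way to get it is to repeatedly peel off a vertex of minimum degree (which is at most $\sqrt{2m}$, since a simple graph on $k$ vertices with all degrees exceeding $\sqrt{2m}$ has more than $m$ edges once $k > \sqrt{2m}$, a contradiction) and orient all its incident edges outward from it, then delete it and recurse. Every vertex, when it is peeled, has degree at most $\sqrt{2m}$ in the remaining graph, so it receives at most $\sqrt{2m}$ out-edges. (Equivalently, one can cite the fact that a graph with degeneracy $d$ admits an orientation with out-degree at most $d$, together with $d \le \sqrt{2m}$.)

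Next I would do the charging. Fix such an orientation. For each edge $\{u,v\}$, directed say from $u$ to $v$, bound $\min(f(u),f(v)) \le f(u)$, i.e. by the value of the tail. Then
\[
\sum_{\{u,v\}\in E}\min(f(u),f(v)) \;\le\; \sum_{\{u,v\}\in E} f(\text{tail of }\{u,v\}) \;=\; \sum_{v\in V} \deg^{\mathrm{out}}(v)\, f(v) \;\le\; \sqrt{2m}\sum_{v\in V} f(v),
\]
using $f \ge 0$ in the last step so that replacing each out-degree by its upper bound $\sqrt{2m}$ only increases the sum. This gives the claimed bound.

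The only real content is the existence of the low-out-degree orientation, so that is the step I would write out carefully; the rest is a one-line charging argument. The small subtlety to get right is the degeneracy bound: I need that in any simple graph on $k$ vertices there is a vertex of degree at most $\sqrt{2m'}$, where $m'$ is the current edge count — this follows from $\sum_v \deg(v) = 2m' \le 2m$, so the minimum degree is at most $2m/k$, and this is at most $\sqrt{2m}$ as soon as $k \ge \sqrt{2m}$; for the finitely many stages with $k < \sqrt{2m}$ the minimum degree is trivially at most $k-1 < \sqrt{2m}$. Either way the peeled vertex has degree $\le \sqrt{2m}$ in the residual graph, which is exactly what the charging needs.
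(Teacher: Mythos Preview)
Your proof is correct and follows the same overall strategy as the paper: orient the edges so that every out-degree is at most $\sqrt{2m}$, then charge $\min(f(u),f(v))$ to the tail of each edge. The only difference is in how the orientation is produced. You use a degeneracy/peeling argument, which works fine but requires the two-case analysis on $k$ versus $\sqrt{2m}$. The paper instead orients each edge from the endpoint of smaller degree to the endpoint of larger degree in the original graph $G$, and then observes in one line that
\[
  2m \;\ge\; \sum_{v\in N_{\text{out}}(u)}\deg(v) \;\ge\; \sum_{v\in N_{\text{out}}(u)}\deg(u) \;\ge\; \deg_{\text{out}}(u)^2,
\]
which gives $\deg_{\text{out}}(u)\le\sqrt{2m}$ directly. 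This is a bit slicker and avoids the iterative peeling, but both constructions are standard and lead to the identical charging step.
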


\begin{proof}
  We first claim there is an orientation of $E$ such that
  each vertex has out-degree at most $\sqrt{2m}$.
  Orient the edge $\set{u,v}$ such that if $u\rightarrow v$, then $\deg(u)\le \deg(v)$.
  Let $\deg_{\text{out}}(u)$ denote the out-degree of $u$.
  Then
  \[
    2m\ge \sum_{v \in N_{\text{out}}(u)}\deg(v)
    \ge \sum_{v \in N_{\text{out}}(u)}\deg(u)
    \ge \deg_{\text{out}}(u)^2.
  \]
  Therefore, every vertex has out-degree at most $\sqrt{2m}$.
  The result then follows from the inequality
  \[
    \sum_{\set{u,v}\in E}\min(f(u),f(v))
    =\sum_{u \in V} \sum_{v \in N_{\text{out}}(u)}\min(f(u),f(v))
    \le\sum_{u \in V} \sum_{v \in N_{\text{out}}(u)} f(u)
    \le \sqrt{2m} \sum_{u\in V} f(u).
  \]
  which completes the proof.
\end{proof}

\begin{restatable}[]{corollary}{CorCleanInsertionAttempts}
\label{cor:clean-insertion-attempts}
  The number of edge insertion attempts is $O(m\sqrt{m^+})$.
  In particular, it is at most $O(nm)$.
\end{restatable}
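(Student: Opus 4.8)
The plan is to combine the two results that immediately precede the statement. Lemma~\ref{lem:sum-min-degree-bound} bounds the number of edge insertion attempts by $\sum_{\{u,v\}\in E^+}\min(\deg(u),\deg(v))$, where $\deg$ is the degree in the original graph, and Fact~\ref{fact:vertex-function} converts any such ``sum of edge-minimums'' into a bound of the form $\sqrt{2\cdot(\#\text{edges})}\cdot\sum_v f(v)$ for an arbitrary nonnegative vertex function $f$. So the proof is essentially a one-line substitution.

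Concretely, I would first recall that $(V,E^+)$ is a simple graph with $|E^+|=m^+$ edges, as already observed after the definition of $m^+$. I then apply Fact~\ref{fact:vertex-function} to the graph $(V,E^+)$ — so the ``$m$'' appearing in that Fact is $m^+$ here — with the vertex function $f(v)=\deg(v)$ taken to be the degree of $v$ in the original graph $G$. Since Fact~\ref{fact:vertex-function} holds for every $f:V\to\R_{\ge 0}$, this choice is legitimate even though the edge set under consideration is $E^+$ rather than $E$. The Fact then gives
\[
\sum_{\{u,v\}\in E^+}\min(\deg(u),\deg(v)) \le \sqrt{2m^+}\sum_{v\in V}\deg(v) = 2m\sqrt{2m^+},
\]
using the handshake identity $\sum_{v\in V}\deg(v)=2m$ for the original graph. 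Chaining this with Lemma~\ref{lem:sum-min-degree-bound} yields that the number of edge insertion attempts is $O(m\sqrt{m^+})$.

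For the ``in particular, $O(nm)$'' part, I would invoke the bound $m^+\le n^2$ (again noted right after the definition of $m^+$, since $(V,E^+)$ is simple), so that $\sqrt{m^+}\le n$ and hence $O(m\sqrt{m^+})=O(nm)$.

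There is no genuinely hard step: the only subtlety worth stating explicitly is to keep the two roles of the graph separate — the edge set is $E^+$, which produces the $\sqrt{2m^+}$ factor, while the vertex weights are the original-graph degrees, which produce the $2m=\sum_v\deg(v)$ factor — and to note that Fact~\ref{fact:vertex-function} is indeed agnostic to which graph's degree function is plugged in as $f$.
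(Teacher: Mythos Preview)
Your proof is correct and matches the paper's own argument essentially line for line: apply Fact~\ref{fact:vertex-function} to the graph $(V,E^+)$ with $f(v)=\deg_G(v)$, combine with Lemma~\ref{lem:sum-min-degree-bound} and the handshake identity to get $O(m\sqrt{m^+})$, then use $m^+\le n^2$ for the $O(nm)$ bound. Your explicit remark about keeping the two graphs' roles separate is exactly the only subtlety in the argument.
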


\begin{proof}
  First, we apply \Cref{fact:vertex-function}
  to the graph $(V,E^+)$ using the degree function
  of the input graph $G=(V,E)$.
  The $O(m \sqrt{m^+})$ bound then follows from \Cref{lem:sum-min-degree-bound}
  and the fact that $\sum_{v\in V}\deg(v)=2m$.
  The $O(nm)$ bound is a consequence of the inequality $m^+ \le n^2$.
\end{proof}

We are now prepared to prove our main theorems about the \MainAlg algorithm.
Our first result states that the algorithm runs in $O(nm)$ time, matching the
conditional hardness result in~\Cref{cor:hardness}.
Our second result is an output-sensitive bound on the running time that
demonstrates the nuances in the speed of the algorithm across various inputs.
The two proofs are analogous to each other,
but use different adjacency structures for the fill graph
in order to achieve a space-time tradeoff that is beneficial in practice.

\begin{proof}[Proof of \Cref{thm:algorithm}]
  We use an adjacency matrix to represent $\var{fill\_graph}$.
  Thus, initializing $\var{fill\_graph}$ requires $O(n^2)$ time and space.
  Note that we may assume $G$ is connected, so $m \ge n-1$.
  Using \Cref{lem:clique-set}, the algorithm runs in $O(m^+ + k)$ time
  since adjacency matrices support constant-time edge queries, insertions,
  and deletions.
  Here $k$ is the total number of edge insertion attempts.
  By~\Cref{cor:clean-insertion-attempts}, we have $k = O(nm)$.
  Therefore, the overall running time is bounded by
  $O(n^2 + m^+ + k) = O(n^2 + n^2 + nm) = O(nm)$, as desired.
\end{proof}

\begin{proof}[Proof of \Cref{thm:algorithm-output-sensative}]
  We represent $\var{fill\_graph}$ as an adjacency list where
  the neighborhood of every node is stored in a balanced binary search tree
  \cite[Chapter 13]{cormen2009introduction}.
  Each edge query and update costs $O(\log n)$.
  Initializing $G$ requires $O(m \log n)$ time and $O(m)$ space.
  The algorithm then runs in $O((m^+ + k) \log n)$ time
  by \Cref{lem:clique-set}, where $k$ is the number of edge insertion attempts.
  The output-sensitive bounds in \Cref{cor:constant-degree}
  and \Cref{cor:clean-insertion-attempts} imply that
  $k = O(\min\set{m\sqrt{m^+}, \Delta m^+})$.
  Since we have $m^+ = \sqrt{m^+} \sqrt{m^+} \le n \sqrt{m^+} = O(m\sqrt{m^+})$,
  the time complexity follows.
  Further, the total space used is $O(m^+)$
  because of the binary search trees.
\end{proof}

\section{Improved SETH-Based Hardness}
\label{sec:hardness}
Now we affirmatively answer a conjecture of
Fahrbach et al.~in~\cite{fahrbach2018graph}
about the conditional hardness of finding
a minimum degree ordering.
In particular, we prove the following stronger result.

\TheoremHardness*

The previous best SETH-based lower bound in~\cite{fahrbach2018graph}
ruled out the existence of a nearly linear time algorithm for computing
an exact minimum degree ordering by showing that a $O(m^{4/3-\varepsilon})$
time algorithm could be used to solve any instance
of the orthogonal vectors problem in subquadratic time,
for any $\varepsilon > 0$.
Our approach has several similarities to that of Fahrbach et. al,
and a consequence of our main hardness result
gives a nearly matching lower bound for the running time of the \MainAlg algorithm.

\CorollaryHardness*

The key to our reduction is a recursive algorithm
for constructing a graph with $O(n\log n)$ vertices and edges such that
after any minimum degree ordering
eliminates all but $n$ vertices, the resulting graph is $K_n$.
Although this construction was originally motivated by connections to
SETH-based hardness, it has several interesting standalone properties.
In particular, it demonstrates a case where the minimum degree heuristic has
extremely poor performance
(i.e.,
it is an input of size $O(n \log n)$ that always results in $\Omega(n^2)$ fill edges).

\subsection{Constructing Sparse Min-Degree $U$-Fillers}
Our goal in this subsection is to construct sparse graphs that contain the
vertex set $U$ and have the additional property that by repeatedly eliminating
a minimum degree vertex, the resulting fill graph is eventually $K_U$.
We begin by defining several specific properties that are helpful for
presenting our construction.

\begin{definition}
\label{def:U-filler}
  A \emph{$U$-filler} is a graph $G$ with vertex set
  $U \cup W$, where $U \cap W = \emptyset$, such that
  after eliminating all of the vertices in $W$,
  the resulting fill graph is $G_{W}^+ = K_U$.
  We call $W$ the set of \emph{extra vertices}.
\end{definition}

\noindent
Many of the graphs we construct have a subset of extra vertices
similar to the one in this definition.
When taking the union of graphs with extra vertices,
we always assume the sets of extra vertices are disjoint.

A simple example of a $U$-filler is the star graph with vertices $U \cup \{w\}$,
where $w$ is the center vertex with degree $|U|$.
Although stars are sparse, their maximum degree can be arbitrarily large.
This is the main challenge in designing adversarial inputs for minimum
degree algorithms.
Note that $K_U$ is itself a $U$-filler.

\begin{definition}
\label{def:U-filler-min-degree}
  A $U$-filler is \emph{min-degree} if after eliminating any proper
  subset of extra vertices $X \subset W$,
  all of the minimum-degree vertices in $G_{X}^+$ are extra vertices.
\end{definition}

\noindent
It immediately follows from \Cref{def:U-filler-min-degree}
that every minimum degree elimination ordering
of a min-degree $U$-filler
eliminates all of the extra vertices in $W$
before any of the vertices in $U$.

Another key property we use in our construction is the following
notion of degree bound.
\begin{definition}
  A $U$-filler is \emph{$d$-bounded}
  if after eliminating any subset of extra vertices $X \subseteq W$,
  all remaining extra vertices have degree at most $d$ in $G_{X}^+$.
\end{definition}

The first graph construction we use is called a \emph{$U$-comb}.
We define a $U$-comb to consist of a path of~$\abs{U}$ extra vertices
together with $\abs{U}$ edges that form a matching between
the path vertices and those in $U$.
\begin{lemma}
  \label{lem:comb}
  A $U$-comb is a $\abs{U}$-bounded $U$-filler.
\end{lemma}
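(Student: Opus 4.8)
The plan is to label the extra vertices of the $U$-comb as $w_1, w_2, \dots, w_k$ (where $k = |U|$) so that they form a path $w_1 - w_2 - \cdots - w_k$, and to label the vertices of $U$ as $u_1, \dots, u_k$ so that $w_i$ is matched to $u_i$. Thus in the $U$-comb, $\deg(w_1) = \deg(w_k) = 2$, every internal $w_i$ has degree $3$, and every $u_i$ has degree $1$. I will prove the two claims separately: (a) the $U$-comb is a $U$-filler, i.e. after eliminating all of $W = \{w_1, \dots, w_k\}$ the fill graph is exactly $K_U$; and (b) it is $|U|$-bounded, i.e. after eliminating any subset $X \subseteq W$, every remaining extra vertex has degree at most $|U|$ in $G_X^+$.

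For (a), I would use the path-characterization of fill edges from the definition of fill graph: $\{u_i, u_j\} \in E_W^+$ iff there is a path from $u_i$ to $u_j$ whose internal vertices all lie in $W$. Since $u_i - w_i - w_{i+1} - \cdots - w_j - u_j$ is such a path for every $i \neq j$ (using the path structure on the $w$'s and the matching edges), every pair of $U$-vertices is adjacent in $G_W^+$, so $G_W^+ = K_U$. (There are no other vertices, since $V = U \cup W$.) This direction is essentially immediate.

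For (b), the cleanest route is to understand the intermediate fill graphs directly. I would argue that eliminating a subset $X \subseteq W$ from the $U$-comb produces a graph on $(U) \cup (W \setminus X)$ in which: each surviving $w_i$ is still adjacent to its matched $u_i$; each surviving $w_i$ is adjacent to the "nearest surviving neighbor" on each side along the path (which may be another $w_j$, or may be nothing if $w_i$ is at the end of a deleted run); and, crucially, the only fill edges created among $U$-vertices are those $\{u_a, u_b\}$ where $u_a, u_b$ are the matched partners of endpoints of a maximal deleted sub-path of $W$ — and $u_a, u_b$ never become adjacent to any $w$ other than their own partners. Concretely, a surviving $w_i$ has neighbors: $u_i$, plus at most one surviving path-vertex to its left, plus at most one to its right — so $\deg_X^+(w_i) \le 3 \le |U|$ (the bound $|U|$ is very loose here, but that is all the lemma claims; I should be slightly careful only in the degenerate case $|U| \le 2$, which can be checked by hand, or one notes $|U|\ge 3$ suffices and small cases are trivial). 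The key sub-claim to nail down is that eliminating a $w$ never produces a fill edge incident to a surviving $w$ other than through its path-neighbors; this follows because any path through eliminated vertices connecting a surviving $w_i$ to something must enter and leave the deleted set along the path, and the matching edges are pendant (the $u_j$ have degree $1$, so no path passes "through" a $u_j$).

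The main obstacle is purely bookkeeping: making precise the claim that the fill graph $G_X^+$ of a $U$-comb, for arbitrary $X$, decomposes cleanly as (surviving comb-teeth) $\cup$ (a short path on surviving $w$'s within each maximal surviving run) $\cup$ (at most one fill edge per maximal deleted run of $w$'s, joining the two $U$-partners bracketing that run). Once this structural description is established — most economically by induction on $|X|$, eliminating one $w_j$ at a time and checking it preserves the description — both (a) and (b) drop out immediately, since every surviving extra vertex has degree at most $3$ throughout.
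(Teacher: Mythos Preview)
Your proof of part (a) is fine and matches the paper's one-line argument (the extra vertices induce a connected subgraph, so eliminating them makes every pair in $U$ adjacent). The gap is in part (b): your central structural claim that a surviving extra vertex has degree at most $3$ in any intermediate fill graph is false, and with it the assertion that ``$u_a,u_b$ never become adjacent to any $w$ other than their own partners.'' Concretely, take $\abs{U}=4$ with path $w_1\text{--}w_2\text{--}w_3\text{--}w_4$ and matching edges $w_iu_i$, and eliminate $X=\{w_2,w_3\}$. Then $w_1$ is adjacent in $G_X^+$ to $u_1$ (direct), to $u_2$ (via $w_2$), to $u_3$ (via $w_2,w_3$), and to $w_4$ (via $w_2,w_3$), giving $\deg_X^+(w_1)=4$. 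More generally, eliminating a maximal run $w_a,\dots,w_b$ creates a clique on $\{u_a,\dots,u_b\}\cup\{w_{a-1},w_{b+1}\}$ (whichever of the two boundary $w$'s exist), so a surviving boundary vertex picks up \emph{all} of the $u$-partners in the run, not just one. Your induction-on-$\abs{X}$ description would therefore not be preserved from the first elimination onward.

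The actual bound in the lemma is exactly $\abs{U}$, and the example above shows it is tight. The paper's argument sidesteps any explicit structural description by a short counting trick: for a surviving extra vertex $v=w_i$, pair each other extra vertex $w_j$ with its partner $u_j$; if $w_j$ survives then $v$ may be adjacent to $w_j$ but not to $u_j$ (any path to $u_j$ would have to pass through $w_j$, since $u_j$ is pendant), while if $w_j$ is eliminated then $w_j$ is not in the graph at all. Hence each of the $\abs{U}-1$ pairs contributes at most one neighbor, and together with $u_i$ this gives $\deg_X^+(v)\le\abs{U}$. Your argument can be repaired along these lines once you correct the picture of which fill edges appear.
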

\begin{proof}
  The extra vertices of the $U$-comb form a connected component,
  so eliminating them results in $K_U$.
  Now suppose some subset of extra vertices is eliminated,
  and let $v$ be one of the remaining extra vertices.
  Consider the neighbors of $v$ in the fill graph.
  Let $w\ne v$ be another extra vertex of the $U$-comb.
  If $w$ has not been eliminated,
  $v$ cannot be adjacent to $w$'s neighbor in $U$.
  So for any of the $\abs{U}-1$ other extra vertices,
  $v$ has at most one neighbor in the fill graph.
  Since $v$ is adjacent to a single other vertex in $U$,
  it follows that the $U$-comb is $\abs{U}$-bounded.
\end{proof}

There is also a straightforward way to combine $U$-combs
to obtain $d$-bounded $U$-fillers for any $d\ge 2$,
although the size of the solution depends on the ratio $\abs{U}/d$.
\begin{lemma}
\label{lem:bounded}
  Let $d\ge 2$ and suppose $\abs{U}/d\le c$.
  Then we can construct a $d$-bounded $U$-filler with $O(\abs{U}c)$ edges
  and maximum degree $O(c)$.
\end{lemma}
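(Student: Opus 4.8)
The plan is to build the desired $d$-bounded $U$-filler by partitioning $U$ into blocks of size roughly $d$ and attaching a $U_j$-comb to each block $U_j$, then gluing the combs together in a way that forces the resulting fill graph to be $K_U$ while keeping every extra vertex's degree $O(c)$ throughout elimination. First I would split $U$ into $r = \lceil |U|/d\rceil$ disjoint blocks $U_1, U_2, \dots, U_r$, each of size at most $d$ (and at least $2$, which is fine since $d \ge 2$ and we may assume $|U|$ large enough, else $K_U$ itself works). By \Cref{lem:comb}, each $U_j$-comb is a $|U_j|$-bounded, hence $d$-bounded, $U_j$-filler. The issue is that the union of these combs only produces $K_{U_1} \cup \dots \cup K_{U_r}$ after eliminating all extra vertices, not $K_U$. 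To fix this, I would introduce a second layer: pick one representative vertex $u_j \in U_j$ from each block, and add a $U^*$-comb on the representative set $U^* = \{u_1, \dots, u_r\}$, which has size $r = O(c)$. After eliminating all extra vertices of all the combs, the fill graph contains $K_{U_1}, \dots, K_{U_r}$ and $K_{U^*}$; but this still is not $K_U$ — there is no edge between, say, a non-representative vertex of $U_1$ and a non-representative vertex of $U_2$.

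So the construction needs one more idea: rather than only connecting representatives, I would connect each block $U_j$ to a common "hub." Concretely, introduce a set $H$ of $O(c)$ hub extra vertices and arrange, for each block $U_j$, a small gadget (e.g., a comb or a path) whose extra vertices link all of $U_j$ into a clique together with all of $H$ — equivalently, treat $U_j \cup H$ as the target set of a comb-like filler of size $O(|U_j| + |H|) = O(d + c)$. Since the number of blocks is $r = O(c)$ and each such gadget has $O(d+c)$ edges, the total edge count is $O(c(d+c)) = O(cd) = O(|U|c)$, using $d \le c \cdot d \le |U| \cdot$ something — wait, I should instead just bound $O(c \cdot d + c^2) = O(|U| c)$ directly from $r\le c$, $|U_j|\le d$, and $|H|=O(c)\le O(d)$ (the last since $c \ge |U|/d \ge 1$ and we can take $|H|$ proportional to $c$, noting $c^2 \le c \cdot |U|/1$... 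I'll need $c \le |U|$, which holds as $c \ge |U|/d \ge |U|/|U| = 1$ is the wrong direction — more care is needed here, likely absorbing $c$-vs-$d$ comparisons into cases). After eliminating all extra vertices, every pair in $U$ is joined: pairs within a block by that block's gadget, and pairs in different blocks $U_i, U_j$ via a common hub vertex $h \in H$ (which becomes adjacent to all of $U_i$ and all of $U_j$, so eliminating $h$ creates the edge). Thus $G_W^+ = K_U$.

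The main obstacle — and the step I expect to require the most care — is verifying the $d$-boundedness throughout the elimination process, i.e., that after eliminating \emph{any} subset $X \subseteq W$ of extra vertices, every surviving extra vertex has degree $O(c)$ (not just that the final fill graph is $K_U$). When hub vertices and comb vertices are partially eliminated, a surviving comb vertex in block $U_j$'s gadget could in principle accumulate neighbors from the entire hub $H$ and from other partially-collapsed gadgets; I would need to show that, by the structure of the comb (each comb vertex has exactly one private $U$-neighbor plus its two path neighbors, as in the proof of \Cref{lem:comb}), the fill degree of any surviving extra vertex is bounded by $O(|H|) + O(1) + (\text{contributions from its own block}) = O(c) + O(d)$, and then handle the $O(c)$-versus-$O(d)$ bookkeeping to land at the claimed $O(c)$ bound and $O(c)$ maximum degree. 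The clean way to organize this is to prove a small composition lemma: the union of a $d_1$-bounded $U_1$-filler and a $d_2$-bounded $U_2$-filler, when $U_1 \cap U_2$ is controlled, is $(d_1 + d_2 + |U_1 \cap U_2|)$-bounded — and then apply it with the combs, choosing block and hub sizes so the parameters stay $O(c)$. I would present that composition lemma first and then instantiate it.
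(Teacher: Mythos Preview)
Your hub construction has a fatal gap for the $d$-boundedness claim. Recall that ``$d$-bounded'' means that after eliminating \emph{any} subset of extra vertices, every remaining extra vertex has fill degree at most $d$. Your hub vertices $H$ are extra vertices, and by design each $h\in H$ lies in the target set of every gadget $(U_j\cup H)$. Hence after eliminating all the comb extras (but none of $H$), the vertex $h$ is adjacent in the fill graph to every vertex of $U$, giving it fill degree at least $\abs{U}-1$, which can be far larger than $d$ (the whole regime of interest is $\abs{U}/d\le c$ with $c$ large). No composition lemma of the kind you sketch can repair this: the hubs must touch every block in order to force $G_W^+=K_U$, and exactly that forces their fill degree to be $\Theta(\abs{U})$. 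You also conflate the two degree requirements in the lemma: ``$d$-bounded'' constrains the fill degrees of \emph{extra} vertices by the exact value $d$, whereas ``maximum degree $O(c)$'' constrains degrees in the \emph{original} graph. Your analysis aims for an $O(c)$ bound on fill degrees, which is neither the right target nor achieved by the hubs.

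The paper's argument avoids hubs entirely and is much simpler. Partition $U$ into $O(c)$ blocks of size at most $d/2$ (not $d$), and for every \emph{pair} of blocks $U_i,U_j$ take a $(U_i\cup U_j)$-comb. Each such comb has target size at most $d/2+d/2=d$, so by \Cref{lem:comb} it is $d$-bounded; because the extra-vertex sets of distinct combs are disjoint and mutually non-adjacent, an extra vertex's fill degree in the union equals its fill degree within its own comb, so the union is $d$-bounded as well. Eliminating all extras yields $\bigcup_{i<j}K_{U_i\cup U_j}=K_U$. There are $O(c^2)$ combs of $O(d)$ edges each, giving $O(c^2 d)=O(\abs{U}c)$ edges, and every $u\in U$ belongs to $O(c)$ combs, giving maximum degree $O(c)$.
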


\begin{proof}
  We partition $U$ into $O(c)$ parts $U_1,U_2,\dots,U_k$ each of size at most $d/2$.
  We then let $G$ be the union of $(U_i\cup U_j)$-combs
  over all unordered pairs $\{i,j\} \in \binom{[k]}{2}$.
  The result of eliminating all extra vertices of $G$ is the union of $K_{U_i\cup U_j}$,
  which is exactly $K_U$.
  Since $\abs{U_i\cup U_j}\le d/2+d/2=d$, the combs are all $d$-bounded by \Cref{lem:comb}.
  Therefore, $G$ is a $d$-bounded $U$-filler.
  We constructed $G$ from $O(c^2)$ combs each with $O(d)$ edges,
  so $G$ has $O(c^2d)=O(\abs{U}c)$ edges.
  The extra vertices of the combs have maximum degree $3$,
  but each vertex in $U$ has degree exactly $k-1$,
  so the maximum degree of $G$ is $O(c)$.
\end{proof}

We now use \Cref{lem:bounded} to construct min-degree $U$-fillers.
The main idea of our approach is to (1) recursively construct
min-degree fillers for two halves of $U$, and (2) connect the halves using
a new $U$-filler with~$O(|U|)$ edges
whose extra vertices are guaranteed to be eliminated before any vertices in $U$.
Combining this idea with divide-and-conquer,
we show that the solution is of size $O(\abs{U} \log\abs{U})$.
Before proceeding to the proof, we reiterate that all extra vertices introduced
in this construction are unique.


\begin{theorem}
  \label{thm:mindegreefiller}
  We can construct a $(\abs{U}-3)$-bounded min-degree $U$-filler
  with $O(\abs{U}\log\abs{U})$ vertices and edges, and maximum
  degree $O(\log\abs{U})$.
\end{theorem}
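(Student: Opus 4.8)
The plan is to prove the theorem by strong induction on $|U|$. For the base case, when $|U|$ is below a fixed constant, take $G=K_U$: it has no extra vertices, so it is a min-degree $(|U|-3)$-bounded $U$-filler for vacuous reasons, and its size and maximum degree are $O(1)$. For the inductive step, partition $U=U_1\sqcup U_2$ into two halves that differ in size by at most one; recursively build a min-degree $(|U_i|-3)$-bounded $U_i$-filler $G_i$ with $O(|U_i|\log|U_i|)$ vertices and edges and maximum degree $O(\log|U_i|)$ for $i\in\{1,2\}$; and apply \Cref{lem:bounded} with $d=|U|-3$ (which is at least $2$ and satisfies $|U|/d=O(1)$) to obtain a $(|U|-3)$-bounded $U$-filler $H$ with $O(|U|)$ edges and maximum degree $O(1)$. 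After renaming so that the extra vertex sets $W_1,W_2,W_H$ of $G_1,G_2,H$ are pairwise disjoint, set $G=G_1\cup G_2\cup H$ with extra vertices $W=W_1\sqcup W_2\sqcup W_H$.

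The routine verifications go first. Since any path contained in one of the three subgraphs is also a path in $G$, eliminating all of $W$ produces a graph on $U$ that contains $K_{U_1}\cup K_{U_2}\cup K_U=K_U$ and is contained in $K_U$, so $G$ is a $U$-filler. The vertex and edge totals satisfy $T(|U|)=T(|U_1|)+T(|U_2|)+O(|U|)$, hence $T(|U|)=O(|U|\log|U|)$. For the maximum degree, note $G_1$, $G_2$, $H$ pairwise share only vertices of $U$, so $u\in U_i$ has degree at most $\deg_{G_i}(u)+\deg_H(u)=O(\log|U_i|)+O(1)=O(\log|U|)$, while each extra vertex retains the degree bound of its own subgraph. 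The $(|U|-3)$-boundedness follows from a \emph{confinement} observation I would then prove: if $X\subseteq W$ contains only extra vertices, then a fill path in $G$ whose interior lies in $X$ cannot move between two of the three pieces, since distinct pieces meet only inside $U$ and $U$ is never eliminated. Hence a surviving $w\in W_i$ has the same fill-degree in $G$ as in $G_i$, which is at most $|U_i|-3\le|U|-3$ by induction, and a surviving $w\in W_H$ has fill-degree equal to its degree inside the unique comb of $H$ containing it, which is at most $d=|U|-3$.

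The substantial step — and the one I expect to be the real obstacle — is the min-degree property: for every proper $X\subsetneq W$, some surviving extra vertex has strictly smaller fill-degree in $G$ than every vertex of $U$. Let $w^\star$ be a surviving extra vertex of minimum fill-degree $\mu^\star$; by confinement $\mu^\star$ is computed entirely within the piece containing $w^\star$. If $w^\star\in W_i$ for $i\in\{1,2\}$: since $W_i$ then has a survivor, $X\cap W_i\subsetneq W_i$ and the inductive min-degree property of $G_i$ gives $\deg^+_{G}(u)\ge\deg^+_{G_i}(u)>\mu^\star$ for every $u\in U_i$; for $u$ in the other half $U_j$, either $W_j$ still has a survivor and the same argument works with $G_j$, or $W_j$ is exhausted so $u$ is adjacent to all of $U_j$ and $\deg^+_G(u)\ge|U_j|-1\ge|U_i|-2>|U_i|-3\ge\mu^\star$, using that the halves differ by at most one and that $\mu^\star\le|U_i|-3$ by boundedness. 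The genuinely delicate case is $w^\star\in W_H$: here I would analyze the comb structure of $H$ to control $\mu^\star$ — a surviving path vertex with a surviving path-neighbour on each side has fill-degree $3$, and configurations avoiding such a vertex force the combs to be heavily eliminated, which in turn forces the $U$-vertices to acquire many fill neighbours from those partially collapsed combs — and then argue that every $u\in U$ still beats $w^\star$, because $u$ carries both its fill-degree inside $H$ and its nonzero fill-degree inside $G_1$ or $G_2$, whereas $w^\star$ is confined to $H$. Making one surviving extra vertex simultaneously undercut all of $U$ in this last case, and choosing the split and the parameter $d$ so the constants match the ``$-3$'' in the statement, is where the real effort goes; completing this case analysis closes the induction and yields all the claimed bounds.
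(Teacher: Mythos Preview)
Your divide-and-conquer scheme, the confinement observation, and the size/degree recurrences are all correct and match the paper. The genuine gap is the parameter you feed into \Cref{lem:bounded}: you take $d=\abs{U}-3$, and with that choice the ``delicate'' case $w^\star\in W_H$ that you leave unfinished is in fact \emph{unfinishable} --- your $G$ is not min-degree.

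Here is a concrete failure. For large $\abs{U}$, with $d=\abs{U}-3$ the partition of $U$ inside $H$ has three parts $P_1,P_2,P_3$ of size roughly $\abs{U}/3$, with combs $C_{12},C_{13},C_{23}$. Fix any $u\in P_1$ and let the adversary eliminate all of $W_1\cup W_2$ together with every path vertex of $H$ except three: the path vertex matched to $u$ in $C_{12}$, the one matched to $u$ in $C_{13}$, and one arbitrary path vertex of $C_{23}$. By confinement, each surviving extra is adjacent in the fill graph to the entire vertex set of its own comb, so has fill-degree about $2\abs{U}/3$. But $u$'s only fill-neighbours through $H$ are its two surviving matches (both comb paths are blocked at the first step), and through the recursive piece containing $u$ it picks up only the other half of $U$, giving $\deg^+_G(u)\approx \abs{U}/2 + 2$. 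Thus $u\in U$ has strictly smaller fill-degree than every surviving extra vertex, violating the min-degree property. Your sketched comb analysis cannot rescue this: the surviving comb vertices genuinely have fill-degree $\Theta(\abs{U})$ while one carefully chosen $u$ sits near $\abs{U}/2$.

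The paper's remedy is simply to invoke \Cref{lem:bounded} with $d=\floor{\abs{U}/2}-2$ instead. Then $G_1$, $G_2$, and $H$ are \emph{all} $(\floor{\abs{U}/2}-2)$-bounded (since $\abs{U_i}-3\le\floor{\abs{U}/2}-2$), so every surviving extra has fill-degree at most $\floor{\abs{U}/2}-2$. If some $u\in U_i$ were a minimum-degree vertex after eliminating a proper subset of $W$, the inductive min-degree property of $G_i$ forces $W_i$ to be exhausted, whence $\deg^+_G(u)\ge\abs{U_i}-1\ge\floor{\abs{U}/2}-1>\floor{\abs{U}/2}-2$, contradicting the existence of any surviving extra. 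No case split on the location of $w^\star$ is needed at all, and since $\floor{\abs{U}/2}-2\le\abs{U}-3$ the construction is automatically $(\abs{U}-3)$-bounded. The ratio $\abs{U}/d$ is still $O(1)$, so the edge and degree bounds are unchanged.
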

\begin{proof}
  We prove this theorem by describing
  a recursive divide-and-conquer algorithm.
  In the base case when
  $\abs{U}\le 7$, simply return $K_U$.
  Since $K_U$ has no extra vertices,
  it is a $(\abs{U}-3)$-bounded min-degree $U$-filler.
  
  Now suppose $\abs{U}\ge 8$, and partition $U$ into $U_1$ and $U_2$
  with sizes $\floor{\abs{U}/2}$ and $\ceil{\abs{U}/2}$,
  respectively.
  We then recursively apply the theorem to construct
  min-degree fillers $G_1$ and $G_2$ for $U_1$ and $U_2$.
  Finally, we apply \Cref{lem:bounded} to construct a
  $(\floor{\abs{U}/2}-2)$-bounded $U$-filler $G_3$.
  (We can do this because $\floor{\abs{U}/2}-2\ge 8/2-2=2$.)
  Then we return the union $G_1\cup G_2\cup G_3$.

  Since $\ceil{\abs{U}/2}-3\le \floor{\abs{U}/2}-2$
  we have that $G_1$, $G_2$, and $G_3$ are $(\floor{\abs{U}/2}-2)$-bounded.
  Since $G_3$ is a $U$-filler, so is $G_1\cup G_2\cup G_3$.
  Therefore, $G_1\cup G_2\cup G_3$ is a $(\floor{\abs{U}/2}-2)$-bounded $U$-filler.
  Since $\abs{U}\ge 8$, this implies in particular that $G_1\cup G_2\cup G_3$
  is $(\abs{U}-3)$-bounded.

  Suppose for contradiction that $G_1\cup G_2\cup G_3$ is not min-degree,
  i.e., after eliminating some proper subset of extra vertices,
  there is a vertex $u\in U$ of minimum degree.
  Let $U_i$ be the part of $U$ that contains $u$.
  Note that the degree of $u$ in $G_1\cup G_2\cup G_3$
  after elimination
  is at least that of $u$ in $G_i$ after eliminating
  the corresponding extra vertices of $G_i$.
  So for $u$ to have minimum degree,
  the min-degree property of $G_i$ implies
  all extra vertices of $G_i$ are eliminated.
  Then, since $G_i$ is a $U_i$-filler,
  the degree of $u$ is at least $\abs{U_i}-1\ge \floor{\abs{U}/2}-1$.
  But since $G_1\cup G_2\cup G_3$ is $(\floor{\abs{U}/2}-2)$-bounded,
  this implies that all extra vertices were eliminated, which is impossible.
  Therefore, $G_1\cup G_2\cup G_3$ is min-degree.

  Finally, we claim that $G_1\cup G_2\cup G_3$ has $O(\abs{U}\log\abs{U})$ edges
  and maximum degree $O(\log\abs{U})$.
  Since 
  \[
    \abs{U}/(\floor{\abs{U}/2}-2)\le \abs{U}/(\abs{U}/2-3)=2/(1-6/\abs{U})\le 8,
  \]
  $G_3$ has $O(\abs{U})$ edges and maximum degree $O(1)$
  by \Cref{lem:bounded}.
  Now consider the divide-and-conquer nature of the construction.
  The number of edges when $\abs{U}=n$ follows a recurrence of the form $f(n)=2f(n/2)+O(n)$,
  which has the solution $f(n)=O(n\log n)$.
  So, the resulting graph has $O(\abs{U}\log\abs{U})$ vertices and edges.
  The degrees of all extra vertices are $O(1)$,
  and any vertex in $U$ is adjacent to at most $O(\log \abs{U})$
  fillers across all recursive levels.
  Therefore, the overall maximum degree is $O(\log \abs{U})$.
\end{proof}

\subsection{Reduction from Orthogonal Vectors}

We now use our min-degree $U$-filler construction to demonstrate the hardness of
finding an exact minimum degree ordering, assuming SETH.
Our connection to SETH is through the following problem.
\begin{definition}
  The \emph{clique union problem} takes as input
  a set of vertices $V$ with $\abs{V}=n$ and subsets of the vertices
  $U_1,U_2,\dots,U_d \subseteq V$, where $d=\Theta(\log^2n)$,
  and asks whether $K_{U_1}\cup K_{U_2} \cup \dots \cup K_{U_d}=K_V$.
\end{definition}

\begin{lemma}
  \label{lem:clique-union-problem}
  Assuming the strong exponential time hypothesis,
  for any $\varepsilon>0$, there is no $O(n^{2-\varepsilon})$ time algorithm
  for the clique union problem.
\end{lemma}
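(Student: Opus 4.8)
The plan is to reduce the orthogonal vectors problem (\Cref{theorem:orthogonal-vectors}) to the clique union problem. Given an orthogonal vectors instance consisting of binary vectors $a_1, a_2, \dots, a_n \in \{0,1\}^d$ with $d = \Theta(\log^2 n)$, I would take the ground set $V = [n]$, letting vertex $i$ represent the vector $a_i$, and for each coordinate $c \in [d]$ define the subset
\[
  U_c = \{\, i \in [n] : a_i[c] = 1 \,\}.
\]
This yields exactly $d = \Theta(\log^2 n)$ subsets of an $n$-element ground set, which is precisely the parameter regime demanded by the clique union problem, and the construction can be written down in $O(nd) = O(n \log^2 n)$ time. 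Before doing so I would scan the vectors in $O(nd)$ time to detect any zero vector; as a zero vector both forces a trivial orthogonal pair and leaves the corresponding vertex uncovered by $\bigcup_c U_c$, we may assume no vector is identically zero, so that $U_1 \cup \cdots \cup U_d = V$ and the vertex sets of $K_{U_1} \cup \cdots \cup K_{U_d}$ and $K_V$ automatically agree.

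The correctness rests on one observation: for distinct $i, j \in [n]$, the pair $\{i,j\}$ is an edge of $K_{U_c}$ for some $c$ if and only if there is a coordinate $c$ with $a_i[c] = a_j[c] = 1$, which is exactly the statement $\langle a_i, a_j \rangle \ge 1$, i.e.\ that $a_i$ and $a_j$ are not orthogonal. Hence $K_{U_1} \cup K_{U_2} \cup \cdots \cup K_{U_d} = K_V$ if and only if every pair of distinct vectors has positive inner product, which is exactly the condition that the orthogonal vectors instance has no orthogonal pair. So an algorithm for the clique union problem decides orthogonal vectors after negating its output. (This argument also works verbatim under the convention that an orthogonal pair may have $i = j$, since we have already removed zero vectors, which are the only vectors orthogonal to themselves.)

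Finally I would do the running-time bookkeeping. Suppose, for some $\varepsilon > 0$, there were an $O(n^{2-\varepsilon})$ time algorithm for the clique union problem. Composing it with the above reduction and the $O(nd)$ zero-vector check — both subquadratic — would decide any orthogonal vectors instance on $n$ vectors with $\Theta(\log^2 n)$ bits in $O(n^{2-\varepsilon'})$ time for some $\varepsilon' > 0$, contradicting \Cref{theorem:orthogonal-vectors} and hence SETH. I do not anticipate a real obstacle in this argument; the only points that warrant care are confirming that the "one subset per coordinate" construction lands the number of subsets squarely in the required $\Theta(\log^2 n)$ window, and cleanly dispatching the degenerate zero-vector (and possible $i = j$) cases so that the equivalence between "$\bigcup_c K_{U_c} = K_V$" and "no orthogonal pair" holds without exceptions.
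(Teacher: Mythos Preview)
Your proposal is correct and takes essentially the same approach as the paper: define one subset $U_c$ per coordinate, consisting of the vectors with a $1$ in that coordinate, and observe that $\bigcup_c K_{U_c} = K_V$ exactly when no two vectors are orthogonal. Your treatment is in fact more careful than the paper's terse version, since you explicitly dispatch the zero-vector and $i=j$ degeneracies and verify the $\Theta(\log^2 n)$ parameter regime, whereas the paper simply asserts the equivalence and invokes \Cref{theorem:orthogonal-vectors}.
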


\begin{proof}
  Let each set $U_i$ correspond to the set of vectors with
  a nonzero entry in the $i$-th dimension.
  There is a pair of orthogonal vectors if and only if
  the union $K_{U_1}\cup K_{U_2} \cup \dots \cup K_{U_d}$ is not the complete graph $K_V$.
  Therefore, the result follows from \Cref{theorem:orthogonal-vectors}
  (\cite{williams2005new}).
\end{proof}

Our approach to prove \Cref{thm:hardness} using \Cref{lem:clique-union-problem}
is outlined in \Cref{alg:clique-union} below.
It is essentially a reduction from the clique union problem to the problem
of finding an exact minimum degree ordering.
The graph $G$ on which we call the minimum degree ordering subroutine
is built as a union over $U_i$-fillers produced by \Cref{thm:mindegreefiller}.
As we prove below, $G$ has special properties that,
given any minimum degree ordering,
allow us to efficiently determine the answer to the clique union instance.

\begin{algorithm}[H]
\caption{Decides if $K_{U_1}\cup K_{U_2}\cup\dots\cup K_{U_d}=K_V$.}
\label{alg:clique-union}
\begin{algorithmic}[1]
  \Function{\alg{CliqueUnion}}{$\text{vertex sets } V \text{ and } U_1,U_2,\dots,U_d \subseteq V$}
  \For{$i=1$ to $d$}
  \State Let $G_i$ be a min-degree $U_i$-filler constructed using \Cref{thm:mindegreefiller}
  \EndFor
  \State Let $G\gets G_1\cup G_2\cup\cdots\cup G_d$
  \State Let $W$ be the set of extra vertices of $G$
  \State Set $\var{elimination\_ordering}\gets \alg{MinimumDegreeOrdering}(G)$
  \For{$i=1$ to $\abs{W}$}
  \If{$\var{elimination\_ordering}[i]\notin W$}
  \State \textbf{return} $\var{false}$ \label{line:early-termination}
  \EndIf
  \EndFor
  \State Set $v\gets \var{elimination\_ordering}[\abs{W}+1]$ \label{line:choose-vertex}
  \State Determine the vertices in $G$ that are reachable from $v$
  via paths whose internal vertices are in $W$ \label{line:reachability}
  \State Let $k$ be the number of vertices in $V$ that are reachable from
  $v$, including $v$
  \State \textbf{return} $k=\abs{V}$
  \EndFunction
\end{algorithmic}
\end{algorithm}
\begin{lemma}
  \label{lem:reduction-correctness}
  If $\alg{MinimumDegreeOrdering}$ returns a minimum degree ordering,
  then $\alg{CliqueUnion}$ correctly decides
  if $K_{U_1}\cup K_{U_2}\cup\dots\cup K_{U_d}=K_V$.
\end{lemma}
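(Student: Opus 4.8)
The plan is to split on whether $K_{U_1}\cup\cdots\cup K_{U_d}=K_V$, after first isolating a structural fact about how fill neighborhoods in $G=G_1\cup\cdots\cup G_d$ decompose over the individual fillers $G_i$ produced by \Cref{thm:mindegreefiller}. Let $W_i$ denote the extra vertices of $G_i$, so $W=\bigcup_i W_i$ with the $W_i$ pairwise disjoint, and note that every vertex of $W_i$ is adjacent in $G$ only to $U_i\cup W_i$. For an eliminated set $X\subseteq W$ write $X_i=X\cap W_i$. First I would prove that a path in $G$ all of whose internal vertices lie in $W$ and whose first vertex lies in $G_i$ can never leave $U_i\cup W_i$, since leaving it would require using a vertex of $V$ as an internal vertex. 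Consequently, for $w\in W_i\setminus X$ we have $N^+_X(w)=N_{(G_i)^+_{X_i}}(w)$, and for $u\in V$ we have $N^+_X(u)=\bigcup_{i:\,u\in U_i}N_{(G_i)^+_{X_i}}(u)$; in particular $\deg^+_X(u)\ge\deg_{(G_i)^+_{X_i}}(u)$ whenever $u\in U_i$. I expect this part to be routine once the ``cannot escape a filler'' observation is isolated.

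For the case $K_{U_1}\cup\cdots\cup K_{U_d}=K_V$, the algorithm must return \textbf{true}, so I would first show that \emph{every} minimum degree ordering eliminates all of $W$ before any vertex of $V$, ruling out the early return on line~\ref{line:early-termination}. Suppose not, and let $u\in V$ be the first $V$-vertex eliminated, so exactly some $X\subsetneq W$ has been removed and $u$ has minimum degree in $G^+_X$. I would argue that every $G_i$ with $u\in U_i$ is by then fully eliminated, i.e., $W_i\subseteq X$: otherwise the min-degree property of $G_i$ (\Cref{def:U-filler-min-degree}) produces $w'\in W_i\setminus X$ with $\deg_{(G_i)^+_{X_i}}(w')<\deg_{(G_i)^+_{X_i}}(u)$, which via the decomposition gives $\deg^+_X(w')<\deg^+_X(u)$, a contradiction. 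Since $K_{U_1}\cup\cdots\cup K_{U_d}=K_V$ puts every pair of distinct vertices into a common $U_i$, we have $\bigcup_{i:\,u\in U_i}U_i=V$, and with all those fillers eliminated the decomposition yields $\deg^+_X(u)=|V|-1$. But $X\subsetneq W$ leaves some $w^\star\in W_j\setminus X$, and the $(|U_j|-3)$-boundedness of $G_j$ forces $\deg^+_X(w^\star)=\deg_{(G_j)^+_{X_j}}(w^\star)\le|U_j|-3<|V|-1$, again contradicting the minimality of $u$. Hence $W$ is exhausted first; the algorithm then picks $v$ on line~\ref{line:choose-vertex}, and since $G^+_W=\bigcup_i K_{U_i}=K_V$ the reachability step sees all $|V|$ vertices and returns \textbf{true}.

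For the case $K_{U_1}\cup\cdots\cup K_{U_d}\ne K_V$, the algorithm must return \textbf{false}. If the ordering is not ``$W$ first,'' line~\ref{line:early-termination} returns \textbf{false} and we are done; otherwise the vertex $v$ chosen on line~\ref{line:choose-vertex} is a minimum degree vertex of $G^+_W=\bigcup_i K_{U_i}$. By the decomposition, the $V$-vertices reachable from $v$ through $W$ are exactly $\{v\}\cup N^+_W(v)$, so $k=1+\deg^+_W(v)$; since $G^+_W$ is a graph on at most $|V|$ vertices, is not $K_V$, and has $v$ as a minimum degree vertex, this forces $k<|V|$, and the algorithm returns \textbf{false}. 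The main obstacle is the yes-case claim that every minimum degree ordering must finish $W$ first: it is the one place where all three filler properties must cooperate — the min-degree property keeps extra vertices ahead of the $U_i$-vertices inside each filler, the yes-instance covering makes a $V$-vertex's degree jump to $|V|-1$ as soon as its fillers are gone, and the degree bound keeps every surviving extra vertex strictly below that threshold.
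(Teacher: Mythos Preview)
Your proposal is correct and follows essentially the same argument as the paper's proof. The only differences are cosmetic: you case-split on whether the clique-union instance is a yes- or no-instance while the paper splits on whether the algorithm terminates early (these are contrapositives of each other), and you make explicit the neighborhood decomposition $N^+_X(u)=\bigcup_{i:\,u\in U_i}N_{(G_i)^+_{X_i}}(u)$ that the paper invokes only implicitly.
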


\begin{proof}
  First, consider the case where \alg{CliqueUnion} terminates
  early by returning $\var{false}$ on
  line~9.
  Then for some proper subset $X\subset W$,
  a minimum degree vertex $v$ of $G^+_X$ is not in $W$.
  Suppose $i$ is an index such that $v\in U_i$.
  By \Cref{thm:mindegreefiller}, $G_i$ is a  min-degree $U_i$-filler.
  By considering the fill graph of $G_i$ after eliminating
  all vertices in $X\cap V(G_i)$, it follows that all the extra vertices of $G_i$ have been eliminated.
  Therefore, the degree of $v$ in $G^+_X$ equals that of $v$ in $K_{U_1}\cup K_{U_2}\cup\cdots\cup K_{U_d}$.
  But \Cref{thm:mindegreefiller} guarantees that for all $i\in [d]$,
  $G_i$ is $(\abs{U_i}-3)$-bounded,
  and thus $(\abs{V}-2)$-bounded.
  So since $X\ne W$, the degree of $v$ in $K_{U_1}\cup K_{U_2}\cup\cdots\cup K_{U_d}$ is at most $\abs{V}-2$.
  Therefore, the algorithm correctly decides that $K_{U_1}\cup K_{U_2}\cup\cdots\cup K_{U_d} \ne K_V$.
  
  Now assume that the algorithm does not terminate early.
  Then the vertex~$v$ that is chosen on line 10
  is the minimum degree
  vertex of $G^+_W$.
  Since $G_i$ is a $U_i$-filler for all $i\in [d]$, 
  $G^+_W=K_{U_1}\cup K_{U_2}\cup\dots\cup K_{U_d}$.
  Recall that an edge $\{u,v\}$ is in $G^+_W$ if and only if there is a path
  from $u$ to $v$ in $G$ with internal vertices in $W$.
  It follows that the value of $k$ found by the algorithm
  is one plus the degree of $v$ in $G^+_W$.
  Therefore, $k=\abs{V}$ if and only if $G^+=K_V$,
  so the algorithm returns the correct decision.
\end{proof}

\begin{lemma}
\label{lem:previous-lemma}
If $\alg{MinimumDegreeOrdering}$ runs in
$O(m^{2-\varepsilon}\Delta^k)$ time (for some $\varepsilon>0$ and $k\ge 0$)
on input $G$ with $m$ edges and
max degree $\Delta$,
then $\alg{CliqueUnion}$ runs in
$O(n^{2-\varepsilon'}d^{k+2})$ time for some $\varepsilon'>0$,
where $n=\abs{V}$.
\end{lemma}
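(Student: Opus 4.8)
The plan is to bound each piece of the running time of \alg{CliqueUnion} separately: the size of the graph $G$, the cost of building it, the cost of the call to \alg{MinimumDegreeOrdering}$(G)$, and the cost of the post-processing (the loop on lines 8--12 and the reachability computation on line 14). First I would establish the parameters of $G$. Each $G_i$ is a min-degree $U_i$-filler from \Cref{thm:mindegreefiller} on a ground set of size $|U_i| \le n$, so it has $O(n\log n)$ vertices and edges and maximum degree $O(\log n)$. Taking the union over $i \in [d]$, the graph $G$ has $M = O(nd\log n)$ edges, $N = O(nd\log n)$ vertices (the extra-vertex sets are disjoint by convention), and maximum degree $\Delta = O(d\log n)$, since a vertex of $V$ can appear in all $d$ fillers but picks up only $O(\log n)$ incident edges in each. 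Constructing all the fillers and forming the union takes time linear in their total size, i.e.\ $O(nd\log n)$.

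Next I would plug these parameters into the assumed running time of the oracle. The call $\alg{MinimumDegreeOrdering}(G)$ runs in
\[
  O\!\left(M^{2-\varepsilon}\Delta^{k}\right)
  = O\!\left((nd\log n)^{2-\varepsilon}(d\log n)^{k}\right)
  = O\!\left(n^{2-\varepsilon}\, d^{k+2-\varepsilon}\, (\log n)^{k+2-\varepsilon}\right).
\]
Since $d = \Theta(\log^2 n)$, we have $\log n = \Theta(\sqrt d)$, so $(\log n)^{k+2-\varepsilon} = O(d^{(k+2)/2})$; absorbing this into the power of $d$ gives a bound of the form $O(n^{2-\varepsilon} d^{O(k)})$. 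More carefully, I would choose $\varepsilon' = \varepsilon/2$ (say) and show $n^{2-\varepsilon}(\log n)^{O(1)} = O(n^{2-\varepsilon'})$ for all large $n$, since any fixed polylogarithmic factor is dominated by $n^{\varepsilon/2}$. This collapses the oracle cost to $O(n^{2-\varepsilon'} d^{k+2})$ after rechecking the exact exponent of $d$ — this bookkeeping of how the $\log n$ factors convert into powers of $d$ and get folded back in is the one spot that needs care, and it is the main (though modest) obstacle.

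Finally I would bound the post-processing. The loop on lines 8--12 scans the first $|W|$ entries of $\var{elimination\_ordering}$ and does an $O(1)$ membership test against $W$ (using an indicator array built in $O(N)$ time), so it costs $O(N) = O(nd\log n)$. The reachability computation on line 14 is a single graph search from $v$ in $G$ restricted to paths through $W$, which runs in time linear in the size of $G$, again $O(nd\log n)$; counting the reachable vertices of $V$ is then $O(N)$. All of these auxiliary costs are $O(nd\log n) = O(n^{1+o(1)} d) = O(n^{2-\varepsilon'}d^{k+2})$ for large $n$, so they are dominated by the oracle call. Summing the four contributions gives the claimed bound $O(n^{2-\varepsilon'} d^{k+2})$, completing the proof.
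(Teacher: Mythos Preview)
Your proposal is correct and follows essentially the same approach as the paper: bound the size and maximum degree of $G$ via \Cref{thm:mindegreefiller}, plug $M=O(nd\log n)$ and $\Delta=O(d\log n)$ into the assumed oracle running time, absorb the resulting $(\log n)^{O(1)}$ factor into $n^{\varepsilon-\varepsilon'}$, and observe that construction and post-processing (a linear-time graph search) are lower order. The detour through $d=\Theta(\log^2 n)$ to convert $\log n$ into $\sqrt d$ is unnecessary---the lemma treats $d$ as a free parameter, and your ``more carefully'' fallback of taking $\varepsilon'=\varepsilon/2$ is exactly what the paper does; the exponent of $d$ then simply satisfies $k+2-\varepsilon\le k+2$.
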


\begin{proof}
For all $i\in[d]$, the graph
$G_i$ is constructed in $O(n\log n)$ time,
has $O(n\log n)$ vertices and edges,
and has maximum degree $O(\log n)$
by \Cref{thm:mindegreefiller}.
Therefore, $G$ has $O(nd \log n)$ vertices and edges
and maximum degree $O(d \log n)$.
We can compute the union of two graphs of size $O(m)$ in $O(m \log m)$ time,
so we can construct $G$ in $O(nd \log^2 n)$ time.
The next step of the algorithm is to run $\alg{MinimumDegreeOrdering}(G)$,
and the running time of this step is
\[
  O\parens*{m^{2-\varepsilon}\Delta^k}
  = O\parens*{(nd\log n)^{2-\varepsilon}(d\log n)^k}
  = O\parens*{n^{2-\varepsilon}d^{k+2}\log^{k+2}n}
  = O\parens*{n^{2-\varepsilon'}d^{k+2}},
\]
for some $\varepsilon'>0$.
To determine reachability at line 11 of the algorithm,
it suffices to use a {breadth-first} search that runs in
$O(\abs{E(G)}) = O(nd \log n)$ time.
Therefore, the algorithm runs in $O(n^{2-\varepsilon'}d^{k+2})$ time.
\end{proof}

We conclude with the proof of our improved conditional hardness
result for computing exact minimum degree elimination orderings.
The complementary lower bound in \Cref{cor:hardness} immediately follows.

\begin{proof}[Proof of \Cref{thm:hardness}]
Assume for contradiction that an $O(m^{2-\varepsilon}\Delta^k)$ time algorithm
exists~for~computing~a~minimum~degree~ordering.
By \Cref{lem:reduction-correctness} and \Cref{lem:previous-lemma}, we can
to obtain an $O(n^{2-\varepsilon'}d^{k+2})$ time algorithm
for deciding if ${K_{U_1}\cup K_{U_2}\cup\dots\cup K_{U_d} = K_V}$,
for some $\varepsilon'>0$.
For instances where $d=\Theta(\log^2n)$,
this algorithm runs in time
$O(n^{2-\varepsilon'}\log^{2(k+2)} n)=O(n^{2-\varepsilon''})$,
for some $\varepsilon''>0$.
However, this contradicts SETH by \Cref{lem:clique-union-problem},
so the result follows.
\end{proof}

\section{Conclusion}
\label{sec:conclusion}



We have presented a new combinatorial algorithm for computing
an exact minimum degree ordering with an $O(nm)$ worst-case running time.
This is the first algorithm that improves on the naive $O(n^3)$ algorithm.
We achieve this result using a careful amortized analysis,
which also leads to strong output-sensitive bounds for the algorithm.

We also show a matching conditional hardness of $O(m^{2-\varepsilon}\Delta^{k})$,
for any $\varepsilon > 0$ and $k \ge 0$,
which affirmatively answers a conjecture~in~\cite{fahrbach2018graph}
and implies there are no minimum degree algorithms with
running time $O(nm^{1-\varepsilon})$ or $O(\sum_{v \in V} \deg(v)^2)$,
assuming SETH.
Together with the $O(nm)$ algorithm, this nearly characterizes the time complexity
for computing an exact minimum degree ordering.
Extending our $U$-filler graph construction to achieve fine-grained hardness
results for other elimination-based greedy algorithms is
of independent interest and an exciting future direction of this work.

\section*{Acknowledgments}
We thank Richard Peng for many helpful discussions about the ideas in this paper.

\bibliographystyle{alpha}
\bibliography{references}

\newpage
\appendix

\end{document}